
\documentclass[preprint,12pt]{elsarticle}




\usepackage{breakurl}             
\usepackage{underscore}           
\usepackage{graphicx}
\usepackage{amssymb}
\usepackage{amsthm}
\usepackage{stmaryrd}
\usepackage{xcolor}
\usepackage{color,soul}
\usepackage{hyperref}
\usepackage{amsmath}
\usepackage{makecell}
\usepackage{fdsymbol}
\usepackage{tikz}
\usepackage{spverbatim}


\newcommand\scalemath[2]{\scalebox{#1}{\mbox{\ensuremath{\displaystyle #2}}}}

\newcommand{\tooleq}{${\it SDN\!\!-\!\!SafeCheck}$}

\newcommand{\svdash}{\vdash_s}
\newcommand{\PHbH}{{HbH}}
\newcommand{\inn}{{\it in}}
\newcommand{\out}{{\it out}}
\newcommand{\pt}{{\it pt}}

\newcommand{\dup}{{{\bf dup}}}

\newcommand{\NetKATdup}{{{\textnormal{NetKAT}}^{\textnormal{-\bf{dup},$*$}}}}

\newenvironment{todo}{\bigskip\hrule\medskip\noindent}{\medskip\hrule\bigskip}

\makeatletter
\newcommand{\pushright}[1]{\ifmeasuring@#1\else\omit\hfill$\displaystyle#1$\fi\ignorespaces}
\newcommand{\pushleft}[1]{\ifmeasuring@#1\else\omit$\displaystyle#1$\hfill\fi\ignorespaces}
\makeatother

\newtheorem{theorem}{Theorem}
\newtheorem{definition}{Definition}
\newtheorem{lemma}{Lemma}
\newtheorem{remark}{Remark}
\newtheorem{corollary}{Corollary}

\journal{JLAMP}
\date{}

\begin{document}

\begin{frontmatter}



\title{Explaining Safety Failures in NetKAT}


\author[KN]{Georgiana Caltais}
\ead{gcaltais@gmail.com}
\author[KN]{H\"unkar Can Tun\c c}
\ead{hcantunc@gmail.com}

\address[KN]{ Department for Computer and Information Science, University of Konstanz, Germany}

\begin{abstract}
This work introduces a concept of explanations with respect to the violation of safe behaviours within software defined networks (SDNs) expressible in NetKAT. The latter is a network programming language based on a well-studied mathematical structure, namely, Kleene Algebra with Tests (KAT). Amongst others, the mathematical foundation of NetKAT gave rise to a sound and complete equational theory. In our setting, a safe behaviour is characterised by a NetKAT policy, or program, which does not enable forwarding packets from an ingress $i$ to an undesirable egress $e$. We show how explanations for safety violations can be derived in an equational fashion, according to a modification of the existing NetKAT axiomatisation.
We propose an approach based on the Maude system for actually computing the undesired behaviours witnessing the forwarding of packets from $i$ to $e$ as above. 
{\tooleq{}} is a tool based on Maude equational theories satisfying important properties such as Church-Rosser and termination. {\tooleq{}} automatically identifies all the undesired behaviours leading to $e$, covering forwarding paths up to a user specified size. 
\end{abstract}



\begin{keyword}
software defined networks \sep
NetKAT \sep
safety \sep
failure analysis \sep
axiomatisations \sep
the Maude system



\end{keyword}

\end{frontmatter}


\section{Introduction}
\label{sec::intro}

Explaining systems failure has been a topic of interest for many years now.
Techniques such as Fault tree analysis (FTA) and Failure mode and effects analysis (FMEA)~\cite{Buckl:2007:MAD:1927558.1927572}, for instance, have been proposed and widely used by reliability engineers in order to understand how systems can fail, and for debugging purposes.

In this paper we focus on explaining violations of safe behaviours in software defined networks (SDNs).
{Software defined networking} is an emerging approach to network programming in a setting where the network control is decoupled from the forwarding functions. This makes the network control directly programmable, and more flexible to change.
SDN proposes open standards such as the OpenFlow~\cite{DBLP:journals/ccr/McKeownABPPRST08} API defining, for instance, low-level languages for handling switch configurations. Typically, this kind of hardware-oriented APIs are not intuitive to use in the development of programs for SDN platforms.
Hence, a suite of network programming languages raising the level of abstraction of programs, and corresponding verification tools have been recently proposed~\cite{DBLP:conf/icfp/FosterHFMRSW11,DBLP:conf/dsl/VoellmyH09,DBLP:conf/sigcomm/VoellmyWYFH13}.

It is a known fact that formal foundations can play an important role in guiding the development of programming languages and associated verification tools, in accordance with an intended semantics obeying essential (behavioural) laws.
Correspondingly, the current paper is targeting {NetKAT}~\cite{DBLP:conf/popl/AndersonFGJKSW14,DBLP:conf/popl/FosterKM0T15}  --a formal framework for specifying and reasoning about networks, integrated within the Frenetic suite of network management tools~\cite{DBLP:conf/icfp/FosterHFMRSW11}. In this work we exploit the sound and complete axiomatisation of NetKAT in~\cite{DBLP:conf/popl/AndersonFGJKSW14} and derive explanations of safety failures in a purely equational fashion.

From a more practical perspective, we introduce {\tooleq}, a tool based on the Maude system~\cite{DBLP:conf/rta/ClavelDELMMT03},
aiming at automatically computing the explanations for undesired behaviours within a NetKAT program that forwards packets from an ingress $i$ to an egress $e$. {\tooleq} is based on Maude confluent and terminating equational specifications, and computes the explanations for all the undesired behaviours covering forwarding paths up to a user specified size.

\medskip
Related to the current work,
the authors of NetKAT~\cite{DBLP:conf/popl/AndersonFGJKSW14} show that checking certain properties about networks, including reachability properties, can be reduced to equivalence checking problems in NetKAT by utilizing its sound and complete axiomatisation. NetKAT is also equipped with a practical tool which can check the equivalence of NetKAT policies~\cite{DBLP:conf/popl/FosterKM0T15}. The main focus of the tool proposed in~\cite{DBLP:conf/popl/FosterKM0T15} is to check whether a property holds in the network. This differs from our focus that we aim on discovering all possible ways a reachability property can be violated, and provide explanations that may be instructive for debugging purposes. 
\color{black}

The results in~\cite{root-fail-netkat} introduce a framework for automated failure localisation in NetKAT.
The approach in~\cite{root-fail-netkat} relies on the generation of test cases based on the network specification, further used to monitor the network traffic accordingly and localise faults whenever tests are not satisfied.
In contrast, our approach provides explanations for possible failures irrespective of particular input packets.

The work in~\cite{DBLP:conf/icfem/DengZL17} was the first to utilize a rewrite engine to manipulate NetKAT expressions in order to verify network properties. The authors of~\cite{DBLP:conf/icfem/DengZL17} propose an operational semantics for NetKAT and implement their formal specification in Maude. By utilizing the proposed operational semantics, the authors mainly follow three different techniques for automated reasoning in NetKAT: model checking of invariants, linear temporal logic based model checking, and normalization. The proposed formulations of the model checking procedures do not provide an explicit counterexample in case of a failure, hence these methods are unsuitable in our context. The normalization method is a different formulation of the equivalence checking approach that was proposed in~\cite{DBLP:conf/popl/AndersonFGJKSW14} for verifying network properties. The normalization method assesses whether NetKAT policies can be converted into the same normal form. This is a relevant method in our setting as well, however, the experimental evaluation in~\cite{DBLP:conf/icfem/DengZL17} shows that the proposed specification for the normalization approach fails to scale even for networks of moderate size. 
\color{black}


\medskip
{\it Our contributions.} This paper is an extension of our previous work in~\cite{DBLP:journals/corr/abs-1909-01745}. In~\cite{DBLP:journals/corr/abs-1909-01745} we introduced a concept of \emph{safety in NetKAT} which, in short, refers to the impossibility of packets to travel from a given ingress to a specified hazardous egress, in the context of the so-called ``port-based hop-by-hop" switch policies allowing only tests and port modifications. Then, we proposed a notion of \emph{safety failure explanation} which, intuitively, represents the set of finite paths within the network, leading to the hazardous egress. Eventually, we provided a modified version of the original axiomatisation of NetKAT exploited in order to \emph{automatically compute the safety failure explanations}, if any. The axiomatisation employed a proposed \emph{star-elimination construction} which enabled the sound extraction of explanations from Kleene $*$-free NetKAT programs. 

The current revised version of the paper extends the results in~\cite{DBLP:journals/corr/abs-1909-01745} as follows.
\begin{enumerate}
\item We propose a notion of safety in the context of more general switch policies defined as arbitrary expressions over the *-free, $\dup$-free fragment of NetKAT.
\item We show that a NetKAT network behaviour is ``safe'' whenever it can be proven so according to the proposed equational system used to derive safety failure explanations (see Corollary~\ref{cor::safe-compl}).
\item We formalize a concept of minimal, or relevant explanations for safety failures in NetKAT, based on a notion of ``normal forms for safety'' (see Section~\ref{sec:minimality}).
\item We introduce {\tooleq}, a practical tool for automatically computing safety failure explanations (see Section~{\ref{sec::tool}}). To the best of our knowledge, this tool is the first to provide automated failure explanations in NetKAT. 
\item We provide experimental evaluations for {\tooleq} based on the Topology Zoo dataset~\cite{DBLP:conf/pam/GillALM08}.
\end{enumerate}

\bigskip
\noindent
{\it Structure of the paper.} In Section~\ref{sec::prelim} we provide an overview of NetKAT and the associated sound and complete axiomatisation. In Section~\ref{sec::safety} we define the concept of safety in NetKAT and we introduce the notion of (minimal) safety failure explanation and the axiomatisation which can be exploited in order to compute such explanations.
In Section~\ref{sec::tool} we introduce the Maude-based tool {\tooleq}.
Experimental evaluation is discussed in Section~\ref{sec:experiments}.
In Section~\ref{sec::discussion} we draw the conclusions and pointers to future work.

\section{Preliminaries}
\label{sec::prelim}


As pointed out in~\cite{DBLP:conf/popl/AndersonFGJKSW14},
a network can be interpreted as an
automaton that forwards packets from one node to another along the links
in its topology. This lead to the idea of using regular expressions --the language of finite automata--, for expressing networks. A path is encoded as a concatenation of processing
steps ($p\cdot q.\ldots$), a set of paths is encoded as a union of paths
($p+q+\ldots$) whereas iterated processing is encoded using Kleene~$*$.
This paves the way to reasoning about properties of networks using Kleene Algebra with Tests (KAT)~\cite{DBLP:journals/toplas/Kozen97}. KAT incorporates both Kleene Algebra~\cite{DBLP:journals/iandc/Kozen94} for
reasoning about network structure and Boolean Algebra for reasoning about the predicates that define switch behaviour.


\begin{figure}[t]
\begin{minipage}[t]{0.45\linewidth}
{\scriptsize
\begin{alignat}{6}
\text{Fields}\quad&  &f::&{=}\enskip f_1\;| \dots |\;f_k \notag \\
\text{Packets}\quad& &pk::&{=}\enskip \{ f_1 = v_1, \dots, & \pushleft{f_k = v_k\}} & \notag \\
\text{Histories}\quad& &h::&{=}\enskip pk{:}{:} \langle \rangle \;| \; pk{:}{:}h & \notag\\
\text{Predicates}\quad& &a,b::&{=}\enskip 1 & \pushleft{Identity} & \notag\\
& & & | \quad 0 \quad & \pushleft{Drop}&\notag \\
& & & | \quad f=n \quad & \pushleft{Test}& \notag \\
& & & | \quad a+b \quad & \pushleft{Disjunction}& \notag \\
& & & | \quad a\cdot b \quad & \pushleft{Conjunction}& \notag \\
& & & | \quad \neg a \quad & \pushleft{Negation}& \notag \\ 
\text{Policies}\quad& &p,q::&{=}\enskip a & \pushleft{Filter} & \notag \\
& & & | \quad f\leftarrow n \quad & \pushleft{Modification}&\notag \\
& & & | \quad p+q \quad & \pushleft{Union}& \notag \\
& & & | \quad p\cdot q \quad & \pushleft{Sequential\;composition}& \notag \\
& & & | \quad p^* \quad & \pushleft{Kleene\;star}& \notag \\
& & & | \quad \dup \quad & \pushleft{Duplication}& \notag
\end{alignat}
}
\end{minipage}
\hfill
\begin{minipage}[t]{0.45\linewidth}
{\scriptsize
 \begin{alignat}{4}
\llbracket p \rrbracket \in &\;\text{H} \rightarrow \mathcal{P}(\text{H}) \notag \\
\llbracket 1 \rrbracket \; h \triangleq &\; \{ h \} \notag \\
\llbracket 0 \rrbracket \; h \triangleq &\; \{ \} \notag \\
\llbracket f=n \rrbracket \; (pk{:}{:}h) \triangleq &\; \left\{
                \begin{array}{ll}
                  \{pk{:}{:}h\} & \textnormal{if}\; pk.f = n \\
                  \{ \} & \textnormal{otherwise}
                \end{array}
              \right. \notag \\
\llbracket \neg a \rrbracket \; h \triangleq &\; \{h\} \setminus (\llbracket a \rrbracket \; h) \notag \\
\llbracket f \leftarrow n \rrbracket \; (pk{:}{:}h) \triangleq &\; \{ pk[f := n]{:}{:}h\} \notag \\
\llbracket p + q \rrbracket \; h \triangleq &\; \llbracket p \rrbracket \; h \cup \llbracket q \rrbracket \; h \notag\\
\llbracket p \cdot q \rrbracket \; h \triangleq &\; ( \llbracket p \rrbracket \bullet \llbracket q \rrbracket) \; h \notag \\
\llbracket p^* \rrbracket \; h \triangleq &\; \bigcup_{i \in N} F^i\;h \notag \\
F^0\; h \triangleq \{ h \} &\text{ and } F^{i+1}\;h \triangleq (\llbracket p \rrbracket \bullet F^i)\;h \notag \\
\llbracket \dup \rrbracket \; (pk{:}{:}h) \triangleq &\; \{pk{:}{:}(pk{:}{:}h)\} \notag
\end{alignat}
}
\end{minipage}
\caption{NetKAT Syntax and Semantics~\cite{DBLP:conf/popl/AndersonFGJKSW14}}
\label{fig::netKAT-syn-sem}
\end{figure}


\begin{figure}[ht]
\center
\[
\small{\scalemath{0.75}{
\begin{array}{r@{}c@{}ll|r@{}c@{}ll}
p + (q+r) & \,\equiv\, & (p+q) + r & \textnormal{\footnotesize KA-PLUS-ASSOC}~ &
~a + (b\cdot c) & \,\equiv\, &\, (a+b)\cdot(a+c) & \textnormal{\footnotesize BA-PLUS-DIST}\\
p + q& \,\equiv\, & q+p & \textnormal{\footnotesize KA-PLUS-COMM} &
a +1 & \,\equiv\, &\, 1 & \textnormal{\footnotesize BA-PLUS-ONE}\\
p + 0& \,\equiv\, &p & \textnormal{\footnotesize KA-PLUS-ZERO} &
a +\neg a & \,\equiv\, &\, 1 & \textnormal{\footnotesize BA-EXCL-MID}\\
p + p & \,\equiv\, & p & \textnormal{\footnotesize KA-PLUS-IDEM} &
a \cdot b & \,\equiv\, &\, b\cdot a & \textnormal{\footnotesize BA-SEQ-COMM}\\
p \cdot (q\cdot r) & \,\equiv\, & (p\cdot q) \cdot r & \textnormal{\footnotesize KA-SEQ-ASSOC}~ &
a \cdot \neg a & \,\equiv\, &\, 0 & \textnormal{\footnotesize BA-CONTRA}\\
1\cdot p & \,\equiv\, & p & \textnormal{\footnotesize KA-ONE-SEQ}~ &
a \cdot a & \,\equiv\, &\,a & \textnormal{\footnotesize BA-SEQ-IDEM}\\
p\cdot 1 & \,\equiv\, & p & \textnormal{\footnotesize KA-SEQ-ONE}~ & & &  & \\
p\cdot (q+r) & \,\equiv\, & p\cdot q + p\cdot r & \textnormal{\footnotesize KA-SEQ-DIST-L}~ & f \leftarrow n \cdot f' \leftarrow n' & \,\equiv\, & f' \leftarrow n' \cdot f \leftarrow n, \textnormal{if } f \not = f' ~~& \textnormal{\footnotesize PA-MOD-MOD-COMM}\\
(p+q)\cdot r & \,\equiv\, & p\cdot r + q\cdot r & \textnormal{\footnotesize KA-SEQ-DIST-R}~ & f \leftarrow n \cdot f' = n' & \,\equiv\, & f' = n' \cdot f \leftarrow n, \textnormal{if } f \not = f' & \textnormal{\footnotesize PA-MOD-FILTER-COMM} \\
0\cdot p & \,\equiv\, & 0 & \textnormal{\footnotesize KA-ZERO-SEQ}~ & {\bf dup} \cdot f = n & \,\equiv\, & f = n \cdot {\bf dup} & \textnormal{\footnotesize PA-DUP-FILTER-COMM} \\
p\cdot 0 & \,\equiv\, & 0 & \textnormal{\footnotesize KA-ZERO-SEQ}~ & f \leftarrow n \cdot f = n & \,\equiv\, & f \leftarrow n & \textnormal{\footnotesize PA-MOD-FILTER} \\
1 + p\cdot p^* & \,\equiv\, & p^* & \textnormal{\footnotesize KA-UNROLL-L}~ & f = n \cdot f \leftarrow n & \,\equiv\, & f = n & \textnormal{\footnotesize PA-FILTER-MOD} \\
1 + p^*\cdot p & \,\equiv\, & p^* & \textnormal{\footnotesize KA-UNROLL-R}~ & f \leftarrow n \cdot f \leftarrow n' & \,\equiv\, & f \leftarrow n' & \textnormal{\footnotesize PA-MOD-MOD} \\
q+p\cdot r \leq r  & \,\Rightarrow\, & p^* \cdot q\leq r & \textnormal{\footnotesize KA-LFP-L}~ & f = n \cdot f = n' & \,\equiv\, &\,0, \textnormal{if } n \not= n'& \textnormal{\footnotesize PA-CONTRA} \\
p+q\cdot r \leq q  & \,\Rightarrow\, & p\cdot r^*\leq q & \textnormal{\footnotesize KA-LFP-R}~ & \Sigma_i f = i & \,\equiv\, & \,1 & \textnormal{\footnotesize PA-MATCH-ALL} 
\end{array}}
}
\]
\caption{NetKAT Axiomatisation~\cite{DBLP:conf/popl/AndersonFGJKSW14}} \label{fig::axioms-NetKAT} 
\end{figure}
\color{black}

NetKAT \emph{packets} ${\it pk}$ are encoded as sets of fields $f_i$ and associated values $v_i$ as in Figure~\ref{fig::netKAT-syn-sem}. \emph{Histories} are defined as lists of packets, and are exploited in order to define the semantics of NetKAT policies/programs as in Figure~\ref{fig::netKAT-syn-sem}. NetKAT \emph{policies} are recursively defined as: {predicates}, field modifications $f \leftarrow n$, union of policies $p+q$ ($+$ plays the role of a multi-casting like operator), sequencing of policies $p\cdot q$, repeated application of policies $p^*$ (the Kleene $*$) and duplication $\bf dup$ (that saves the current packet at the beginning of the history list).
At this point, it might be worth mentioning that {\bf dup} plays a role in building the NetKAT language model but, as we shall later see, it is not necessary in our syntactic approach to failure analysis.

\emph{Predicates}, on the other hand, can be seen as filters. The constant predicate $0$ drops all the packets, whereas its counterpart predicate $1$ retains all the packets. The test predicate $f = n$ drops all the packets whose field $f$ is not assigned value $n$. Moreover, $\neg a$ stands for the negation of predicate $a$, $a+b$ represents the disjunction of predicates $a$ and $b$, whereas $a\cdot b$ denotes their conjunction.

Let $H$ be the set of all histories, and ${\cal P}(H)$ be the power set of $H$.
In Figure~\ref{fig::netKAT-syn-sem},
the semantic definition of a NetKAT policy $p$ is given as a function $\llbracket p\rrbracket$ that takes a history $h \in H$ and produces a (possibly empty) set of histories in ${\cal P}(H)$.
Some intuition on the semantics of policies was already provided in the paragraph above.
In addition, note that negated predicates drop the packets not satisfying that predicate: $\llbracket \neg a\rrbracket h = \{h\} \setminus \llbracket a \rrbracket h$. The sequential composition of policies $\llbracket p \cdot q\rrbracket$ denotes the Kleisli composition $\bullet$ of the functions  $\llbracket p \rrbracket$ and  $\llbracket q\rrbracket$.

The repeated iteration of policies is interpreted as the union of $F^i\,h$, where the semantics of each $F^i $ coincides with the semantics of the policy resulted by concatenating $p$ with itself for $i$ times, for $i \in {\mathbb N}$.

In Figure~\ref{fig::axioms-NetKAT} we recall the sound and complete axiomatisation of NetKAT. The Kleene Algebra with Tests axioms in Figure~\ref{fig::axioms-NetKAT}, have been formerly introduced in~\cite{DBLP:journals/toplas/Kozen97}.
Completeness of NetKAT results from the packet algebra axioms in Figure~\ref{fig::axioms-NetKAT}.
The axiom {\small PA-MOD-MOD-COMM} stands for the commutativity of different field assignments, whereas {\small PA-MOD-FILTER-COMM} denotes the commutativity of different field assignments and tests, for instance.
{\small PA-MOD-MOD} states that two subsequent modifications of the same field can be reduced to capture the last modification only. The axiom {\small PA-CONTRA} states that the same field of a packet cannot have two different values, etc.

We write $\vdash e \equiv e'$, or simply $e \equiv e'$, whenever the equation $e \equiv e'$ can be proven according to the NetKAT axiomatisation.

Assume, for an example, a simple network consisting four hosts $H_1, H_2, H_3$ and $H_4$ communicating with each other via two switches $A$ and $B$, via the uniquely-labeled ports $1, 2, \ldots, 6$, as illustrated in Figure~\ref{fig::simpl-net}. The network topology can be given by the NetKAT expression:
\begin{equation}\label{eq::topology}
\begin{array}{rcl}
t & \triangleq & \pt= 5 \cdot \pt \leftarrow 6 + \pt = 6 \cdot \pt \leftarrow 5 \,+\\
& & \pt=1 + \pt=2 + \pt=3 + \pt=4
\end{array}
\end{equation}
For an intuition, in~(\ref{eq::topology}), the expression $ \pt= 5 \cdot \pt \leftarrow 6 + \pt = 6 \cdot \pt \leftarrow 5$ encodes the internal link $5 - 6$ by using the sequential composition of a filter that keeps the packets at one end of the link and a modification that updates the $\pt$ fields to the location at the other end of the link.
A link at the perimeter of the network is encoded as a filter that returns the packets located at the ingress port.


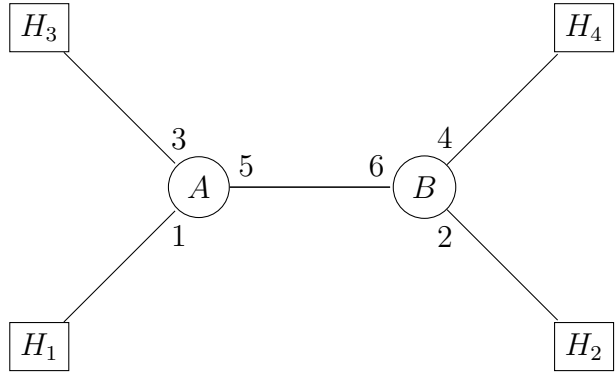
\begin{figure}[h] 
               \centering \begin{tikzpicture}[>=stealth,
   shorten >=1pt,auto,node distance=3cm,switch node/.style={circle,draw}, host node/.style={rectangle,draw}]
                  \node[host node] (h1) {$H_1$};

                  \node[switch node] (s1) [above right of=h1] {$A$};
                  \node[host node] (h3) [above left of=s1] {$H_3$};
                  \node[switch node] (s2) [right of=s1] {$B$};
                  \node[host node] (h2) [below right of=s2] {$H_2$};
                  \node[host node] (h4) [above right of=s2] {$H_4$};
                  
                  \path[-]
                    (h1) edge [pos=0.95] node [xshift=0.1cm,below] {$1$} (s1)
                    (h3) edge [pos=0.95] node [xshift=0.1cm,above] {$3$} (s1)
                    (s1) edge [pos=0.1] node [above] {$5$} (s2)
                    (s1) edge [pos=0.9] node [above] {$6$} (s2)
                    (s2) edge [pos=0.05] node [xshift=-0.1cm,above] {$4$} (h4)
                    (s2) edge [pos=0.05] node [xshift=-0.1cm,below] {$2$} (h2);

                \end{tikzpicture}
            \caption{A Simple Network} \label{fig::simpl-net}
            \end{figure}

Furthermore, assume a programmer $P_1$ as in~\cite{DBLP:conf/popl/AndersonFGJKSW14} which has to encode a switch policy that only enables transferring packets from $H_1$ to $H_2$. $P_1$ might define the ``hop-by-hop'' policy in~(\ref{eq::pol-p1}), where each summand stands for the forwarding policy on switch $A$ and $B$, respectively. 
\begin{equation}\label{eq::pol-p1}
p_1  \triangleq \pt = 1 \cdot \pt \leftarrow 5 + \pt = 6 \cdot \pt \leftarrow 2
\end{equation}
In the expression above, the NetKAT expression $\pt = 1 \cdot \pt \leftarrow 5$ sends the packets arriving at port $1$ on switch $A$, to port $5$, whereas $\pt = 6 \cdot \pt \leftarrow 2$ sends the packets at port $6$ on switch $B$, to port $2$. 

At this point, from $P_1$'s perspective, the end-to-end behaviour of the network is defined as:
\begin{equation}\label{eq::p1-e-to-e}
(\pt = 1) \cdot (p_1 \cdot t)^* \cdot (\pt = 2)
\end{equation}
In words: packets situated at ingress port $1$ (encoded as $\pt = 1$) are forwarded to egress port $2$ (encoded as $\pt = 2$) according to the switch policy $p_1$ and topology $t$ (encoded as $(p_1 \cdot t)^*$).

More generally, assuming a switch policy $p$, topology $t$, ingress ${\it in}$ and egress ${\it out}$, the \emph{end-to-end behaviour} of a network is defined as:
\begin{equation}\label{eq::net-beh}
{\it in} \cdot (p \cdot t)^* \cdot {\it out}
\end{equation}

Note that, unlike the end-to-end NetKAT network behaviour in~\cite{DBLP:conf/popl/AndersonFGJKSW14}, the policy in~(\ref{eq::net-beh}) does not contain {\bf dup}. As discussed in more detail in Section~\ref{sec::expln-fail}, our (syntactic) approach looks at each operation within a NetKAT expression, hence there is no need to use {\bf dup} in order to record the individual ``hops'' that packets take as they go through the network.

Based on~(\ref{eq::p1-e-to-e}), in order to assess the correctness of $P_1$'s program, one has to show that:
\begin{enumerate}
\item packets at port $1$ reach port $2$, i.e.,
\begin{equation}\label{eq::p1-reach}
\vdash (\pt = 1) \cdot (p_1 \cdot t)^* \cdot (\pt = 2) \not \equiv 0
\end{equation}
\item no packets at port $1$ can reach ports $3$ or $4$, i.e.,
\begin{equation}\label{eq::p1-not-reach}
\vdash (\pt = 1) \cdot (p_1 \cdot t)^* \cdot (\pt = 3 + \pt = 4) \equiv 0.
\end{equation}
\end{enumerate}

By applying the NetKAT axiomatisation, the inequality in~(\ref{eq::p1-reach}) can be equivalently rewritten as:
\begin{equation}\label{eq::equiv-one-reach-two}
\vdash \pt = 1 \cdot \pt \leftarrow 2 + e\not \equiv 0
\end{equation}
with $e$ a NetKAT expression. Observe that $\pt = 1 \cdot \pt \leftarrow 2 $ cannot be reduced further. Hence, the inequality in~(\ref{eq::p1-reach}) holds, as $\pt = 1 \cdot \pt \leftarrow 2 \not \equiv 0$. In other words, the packets located at port $1$ reach port $2$.
Showing that no packets at port $1$ can reach port $3$ or $4$ follows in a similar fashion.

\section{Safety and Failures in NetKAT}
\label{sec::safety}

As discussed in the previous section, arguing on equivalence of NetKAT programs can be easily performed in an equational fashion. One interesting way of further exploiting the NetKAT framework is to formalise and reason about well-known notions of program correctness such as safety, for instance. Intuitively, a safety property states that ``something bad never happens''. Ideally, the framework would provide a positive answer whenever a certain safety property is satisfied by the program, and an explanation of what went wrong in case the property is violated.

Consider the example of programmer $P_1$. The ``bad'' thing that could happen is that his switch policy enabled packets to reach ports $3$ or $4$. One can encode such a hazard via the egress policy ${\it out} \triangleq \pt = 3 + \pt = 4$, and the whole safety requirement as in~(\ref{eq::p1-not-reach}). As previously discussed, the NetKAT axiomatisation provides a positive answer with respect to the satisfiability of the safety requirement in~(\ref{eq::p1-not-reach}).

Firstly, observe that our approach is syntactic in nature and it does not require recording individual packet modifications, or simulating actual ``moves" in the NetKAT corresponding automata. Hence, it suffices to consider $\dup$-free NetKAT expressions. As we shall later see, this also contributes to deriving more concise, $\dup$-free failure explanations.

Secondly, observe that from a more practical perspective, the Kleene-$*$ is mainly used for ensuring a ``looping" structure to allow packet moves along the hops. Thus, in our work, we consider ingress ($\inn$), egress ($\out$), switch policies ($p$) and topologies ($t$) encoded in terms of $\dup$-free, $*$-free NetKAT expressions, while the overall behaviour of a network is given as $\inn \cdot (p \cdot t)^* \cdot \out$.

We call $\NetKATdup$ the $\dup$-free, $*$-free fragment of NetKAT. We further proceed by formalizing a safety concept in NetKAT.

\begin{definition}[In-Out Safe]\label{def::safety}
Assume the $\NetKATdup$ expressions defining a network topology $t$, a switch policy $p$, an ingress policy $\inn$, and an egress policy $\out$, the latter encoding the hazard, or the ``bad thing''. The end-to-end network behaviour is \emph{in-out safe} whenever the following holds:
\begin{equation}
\vdash \inn \cdot (p \cdot t)^* \cdot \out \equiv 0.
\end{equation}
\end{definition}
Intuitively, none of the packages at ingress $\inn$ can reach the ``hazardous'' egress $\out$ whenever forwarded according to the switch policy $p$, across the topology $t$.

We call the \emph{size of the network} the number of forwarding links within the network.

\begin{remark}
A notion of reachability within NetKAT-definable networks was proposed in~\cite{DBLP:conf/popl/AndersonFGJKSW14} based on the existence of a non-empty packet history that, in essence, records all the packet modifications produced by the policy $ \inn \cdot (p \cdot t)^* \cdot \out $. This is more like a model-checking-based technique that enables identifying \emph{one} counterexample witnessing the violation of the property $ \inn \cdot (p \cdot t)^* \cdot \out \equiv 0$.
As we shall later see, in our setting, we are interested in identifying \emph{all} (minimal) counterexamples. Hence, we propose a notion of in-out safe behaviour for which, whenever violated, we can provide all relevant bad behaviours.
\end{remark}

Going back to the example in Section~\ref{sec::prelim}, assume a new programmer $P_2$ which has to enable traffic only from $H_3$ to $H_4$. Assuming the network in Figure~\ref{fig::simpl-net}, $P_2$ encodes the {\PHbH} switch policy:
\begin{equation}\label{eq::pol-p2}
p_2  \triangleq \pt = 3 \cdot \pt \leftarrow 5 + \pt = 6 \cdot \pt \leftarrow 4
\end{equation}
The end-to-end behaviour can be proven correct, by showing that:
\begin{enumerate}
\item packets at port $3$ reach port $4$, i.e.,
\begin{equation}\label{eq::p2-reach}
\vdash (\pt = 3) \cdot (p_2 \cdot t)^* \cdot (\pt = 4) \not \equiv 0
\end{equation}
\item no packets at port $3$ can reach ports $1$ or $2$, i.e.,
\begin{equation}\label{eq::p2-not-reach}
\vdash (\pt = 3) \cdot (p_2 \cdot t)^* \cdot (\pt = 1 + \pt = 2) \equiv 0.
\end{equation}
\end{enumerate}

Nevertheless, it is easy to show that the composed policies $p_1$ in~(\ref{eq::pol-p1}) and $p_2$ in~(\ref{eq::pol-p2}) do not guarantee a safe behaviour. Namely, in the context of the {\PHbH} policy $p_1 + p_2$, packets at port $1$ can reach port $4$, and packets at port $3$ can reach port $2$. This violates the correctness properties in~(\ref{eq::p1-not-reach}) and~(\ref{eq::p2-not-reach}), respectively:
\begin{equation}\label{eq::p12-not-safe1}
\vdash (\pt = 1) \cdot ((p_1 + p_2) \cdot t)^* \cdot (\pt = 3 + \pt = 4)\not \equiv 0
\end{equation}
\begin{equation}\label{eq::p12-not-safe2}
\vdash (\pt = 3) \cdot ((p_1 + p_2) \cdot t)^* \cdot (\pt = 1 + \pt = 2)\not \equiv 0
\end{equation}

In the next section, we provide a framework for explaining the failure of network safety as expressed in~(\ref{eq::p12-not-safe1}) and~(\ref{eq::p12-not-safe2}).

\subsection{Explaining Safety Failures}\label{sec::expln-fail}

Naturally, the first attempt to explain safety failures is to derive the counterexamples according to the NetKAT axiomatisation. Take, for instance, the end-to-end behaviour $(\pt = 1) \cdot ((p_1 + p_2) \cdot t)^* \cdot (\pt = 3 + \pt = 4)$ in~(\ref{eq::p12-not-safe1}). The axiomatisation leads to the following equivalence:
\begin{equation}\label{eq::simple-counterexample}
(\pt = 1) \cdot ((p_1 + p_2) \cdot t)^* \cdot (\pt = 3 + \pt = 4) \equiv (\pt = 1 \cdot \pt \leftarrow 4) + e
\end{equation}
where $e$ is a NetKAT expression containing the Kleene $*$. A counterexample can be immediately spotted, namely: $\pt = 1 \cdot \pt \leftarrow 4$. Nevertheless, the information it provides is not intuitive enough to serve as an explanation of the failure. Moreover, $e$ can hide additional counterexamples revealed after a certain number of $*$-unfoldings according to {\small KA-UNROLL-R} and {\small KA-UNROLL-L} in Figure~\ref{fig::axioms-NetKAT}.

In what follows, the focus is on the following two questions:
\begin{enumerate}
\item[$Q_1$:] Can we reveal \emph{more information} within the counterexamples witnessing safety failures?
\item[$Q_2$:] Can we reveal \emph{all the counterexamples} hidden within NetKAT expressions containing $*$?
\end{enumerate}

The answer to $Q_1$ is relatively simple: yes, we can reveal more information on how the packets travel across the topology by removing the {\small PA-MOD-MOD} and {\small PA-FILTER-MOD} axioms in Figure~\ref{fig::axioms-NetKAT}. Recall that, intuitively, this axiom records only the last modification from a series of modifications of the same field.

The answer to $Q_2$ lies behind the following two observations. (1) From a practical perspective, in order to explain failures it suffices to look at minimal forwarding paths within the network topology that lead from $\inn$ to $\out$. (2) Traversing the same path twice does not add insightful information about the reason behind the violation of a safety property, as the network behaviour is preserved in the context of that path. This is also in accordance with the minimality criterion invoked in the seminal work on causal reasoning in~\cite{DBLP:conf/sum/Halpern11}, for instance.
It is intuitive to see that given a NetKAT program $\inn \cdot (p \cdot t)^*  \cdot \out$ there is a sufficient number of $*$-unfoldings that can reveal all the relevant paths from $\inn$ to $\out$. As shown by our experimental evaluation, in most of the practical cases, it suffices to
analyze paths of length
equal with the size $n$ of the network.

Theorem~\ref{thm::star-elim-safety} states that safety in NetKAT programs reduces to showing that there are no paths from $\inn$ to $\out$ for any hop-by-hop forwarding strategy on individual switches complying to a switch policy $p$. The result in Theorem~\ref{thm::star-elim-safety} follows straightforwardly by Lemma~\ref{lm::expand-m} and Lemma~\ref{lm::expand-m-star}.

Given a NetKAT policy $q$ and a natural number $m$, we write $q^m$ to denote the repeated application of $q$ for $m$ times:
\[
q^m = \begin{cases}
    1, & \text{if $m = 0$}\\
    q \cdot q ^ {m-1} , & \text{if $m \geq 1$}.
  \end{cases}
\]
We call \emph{repetitions} expressions of shape $p^m$.





\begin{lemma}\label{lm::expand-m}
Let $p,\, t$ be two NetKAT policies. The following holds:
\begin{equation}\label{eq::unfold}
\forall n \in \mathbb{N}.~ (1 + p\cdot t)^n \,\, \equiv \,\, 1 + p\cdot t + (p\cdot t)^2 + \ldots + (p\cdot t)^n
\end{equation}
\end{lemma}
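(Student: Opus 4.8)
The plan is to prove the identity by induction on $n$, reading it as a statement in the idempotent semiring underlying KAT and using only the pure Kleene-algebra (non-$*$) axioms of Figure~\ref{fig::axioms-NetKAT}. To lighten the notation I would abbreviate $q \triangleq p\cdot t$, so that the goal becomes $(1+q)^n \equiv 1 + q + q^2 + \cdots + q^n$, where both the outer power $(1+q)^n$ and the inner powers $q^i$ are instances of the repetition notation $(\cdot)^m$ defined immediately before the lemma.

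For the base case $n=0$, the definition of repetition gives $(1+q)^0 = 1$, while the right-hand side consists of the single summand $q^0 = 1$; both sides are literally $1$, so the equivalence is immediate.

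For the inductive step, assume $(1+q)^n \equiv 1 + q + \cdots + q^n$. By the definition of repetition, $(1+q)^{n+1} = (1+q)\cdot(1+q)^n$, and the induction hypothesis rewrites this to $(1+q)\cdot(1+q+\cdots+q^n)$. I would then distribute on the left with {\small KA-SEQ-DIST-L}, simplify the $1\cdot(\cdots)$ factor by {\small KA-ONE-SEQ}, and push $q$ through the remaining sum with {\small KA-SEQ-DIST-L} again, using $q\cdot q^i \equiv q^{i+1}$ (which is exactly the defining clause of repetition, up to {\small KA-SEQ-ASSOC}). This yields
\[
(1+q)^{n+1} \equiv \big(1 + q + \cdots + q^n\big) + \big(q + q^2 + \cdots + q^{n+1}\big).
\]

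It remains to collapse the two overlapping blocks into the single sum $1 + q + \cdots + q^{n+1}$, and this is the only slightly delicate point. It is a purely ``$+$-level'' rearrangement: {\small KA-PLUS-ASSOC} and {\small KA-PLUS-COMM} bring each repeated term $q^i$ (for $1 \le i \le n$) next to its duplicate, and {\small KA-PLUS-IDEM} fuses the two copies into one. I expect this merging step to be the main obstacle, not because it is deep but because making the reordering fully rigorous requires an auxiliary observation — e.g. a small side-induction, or an appeal to the fact that $+$ is associative, commutative and idempotent so that a finite sum is determined by its \emph{set} of summands (here $\{q^0,\dots,q^n\}\cup\{q^1,\dots,q^{n+1}\} = \{q^0,\dots,q^{n+1}\}$) — rather than a single axiom application. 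Everything else in the argument is a direct unfolding.
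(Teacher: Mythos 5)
Your proposal is correct and follows essentially the same route as the paper's own proof: induction on $n$, unfolding the repetition, applying the induction hypothesis, distributing with the {\small KA-SEQ-DIST} axioms, and collapsing the duplicated summands via {\small KA-PLUS-IDEM} (together with associativity and commutativity of $+$). You are merely more explicit than the paper about the final merging step and about which distributivity axiom applies where; the substance is identical.
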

\begin{proof}
The proof follows immediately, by induction on $n$ and by the Kleene Algebra axioms in Figure~\ref{fig::axioms-NetKAT}.\\
\emph{Base case:} $n = 0$. If $n = 0$ then $(1 + (p \cdot t))^0 = 1$, inferred based on the definition of Kleisli composition.\\
\emph{Induction step:} Assume~(\ref{eq::unfold}) holds for all $k$ such that $0 \leq k \leq n$. It follows that:

\[
\begin{array}{rl}
(1 + p\cdot t)^{n+1} & \equiv_\textnormal{(Kleisli comp.)}\\
(1 + p\cdot t)^n \cdot (1 + p\cdot t) & \equiv_\textnormal{(ind. hypo.)}\\
(1 + p\cdot t + (p\cdot t)^2 + \ldots + (p\cdot t)^n)\cdot (1 + p\cdot t) & \equiv_\textnormal{({ KA-SEQ-DIST-L/R, }}\\
 & {~}_\textnormal{~~~KA-PLUS-IDEM)}\\
1 + p\cdot t + (p\cdot t)^2 + \ldots + (p\cdot t)^n + &\\
p\cdot t + (p\cdot t)^2 + \ldots + (p\cdot t)^n + (p\cdot t)^{n+1} & \equiv_\textnormal{({KA-PLUS-IDEM})}\\
1 + p\cdot t + (p\cdot t)^2 + \ldots + (p\cdot t)^{n} + (p\cdot t)^{n+1} &
\end{array}
\]
Hence,~(\ref{eq::unfold}) holds.
\end{proof}

\begin{lemma}\label{lm::expand-m-star}
Let $p,\, t,\, \inn,\, \out$ be NetKAT policies. The following holds:
\begin{equation}\label{eq::unfold-star}
\forall n \in \mathbb{N}.~\inn \cdot (1 + p\cdot t)^n\cdot \out \,\, \leq \,\, \inn \cdot (p\cdot t)^*\cdot \out
\end{equation}
\end{lemma}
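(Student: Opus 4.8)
The plan is to reduce everything to the expansion already provided by Lemma~\ref{lm::expand-m} and then bound each summand by $(p\cdot t)^*$ using the Kleene-star axioms. Recall that in Kleene Algebra the natural order is defined by $a \leq b$ iff $a + b \equiv b$, and that this order is preserved by both operations: $a \leq b$ implies $c\cdot a \leq c\cdot b$, $a\cdot c \leq b\cdot c$, and $a + c \leq b + c$ (these monotonicity laws are standard consequences of the KA axioms in Figure~\ref{fig::axioms-NetKAT}). Writing $x = p\cdot t$, the goal becomes $\inn\cdot(1+x)^n\cdot\out \leq \inn\cdot x^*\cdot\out$.

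First I would establish the auxiliary fact that every finite power is below the star, that is $x^i \leq x^*$ for all $i\in\mathbb{N}$, by induction on $i$. For the base case $i=0$, the axiom {\small KA-UNROLL-L} gives $x^* \equiv 1 + x\cdot x^*$, so adding $1$ and using {\small KA-PLUS-IDEM} yields $1 + x^* \equiv x^*$, i.e. $x^0 = 1 \leq x^*$. From the same equation one also reads off $x\cdot x^* \leq x^*$, since $x\cdot x^* + x^* \equiv x\cdot x^* + 1 + x\cdot x^* \equiv 1 + x\cdot x^* \equiv x^*$. For the inductive step, assuming $x^i \leq x^*$, left-monotonicity of $\cdot$ gives $x^{i+1} = x\cdot x^i \leq x\cdot x^*$, and transitivity with $x\cdot x^* \leq x^*$ gives $x^{i+1} \leq x^*$.

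Next, summing these inequalities via monotonicity of $+$ yields $\sum_{i=0}^{n} x^i \leq x^*$. By Lemma~\ref{lm::expand-m} the left-hand side is provably equal to $(1+x)^n$, hence $(1+x)^n \leq x^*$. Finally, applying left-multiplication by $\inn$ and right-multiplication by $\out$ and invoking monotonicity of $\cdot$ twice, I obtain $\inn\cdot(1+x)^n\cdot\out \leq \inn\cdot x^*\cdot\out$, which is exactly~(\ref{eq::unfold-star}).

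The only delicate points are bookkeeping ones: making explicit that $\leq$ is a preorder compatible with the operations (so that transitivity and the two monotonicity laws may be used freely), and recording that these compatibility laws are themselves derivable from the axioms in Figure~\ref{fig::axioms-NetKAT} rather than assumed. I do not expect a genuine obstacle here, since the whole argument is a routine combination of Lemma~\ref{lm::expand-m} with the unrolling axiom; the main thing to get right is the direction of the inequality, as we only need $\leq$ and not $\equiv$: the term $(1+x)^n$ captures paths of length at most $n$, whereas $x^*$ captures paths of every length.
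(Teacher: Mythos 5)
Your proof is correct, but it is organized differently from the paper's. The paper does not prove $x^i \leq x^*$ power by power; instead it unrolls the star $n+1$ times to obtain $\inn\cdot(p\cdot t)^*\cdot\out \equiv \inn\cdot(1+p\cdot t)^n\cdot\out + \inn\cdot(p\cdot t)^{n+1}\cdot(p\cdot t)^*\cdot\out$ (using {\small KA-UNROLL-L/R}, distributivity, idempotence, and Lemma~\ref{lm::expand-m}), so that the left-hand side of~(\ref{eq::unfold-star}) appears literally as a summand of the right-hand side and the inequality follows at once from the definition $a \leq b$ iff $a + b \equiv b$ --- no monotonicity laws are needed. Your route instead establishes the auxiliary family $x^i \leq x^*$ by induction, sums via monotonicity of $+$, and then multiplies by $\inn$ and $\out$ via monotonicity of $\cdot$; this is the more standard ``modular'' Kleene-algebra argument, but it obliges you to first derive (or at least cite) the compatibility of $\leq$ with both operations, which you correctly flag as the delicate bookkeeping point. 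The paper's version buys brevity by packaging the whole bound into one explicit decomposition; yours buys reusability, since $x^i \leq x^*$ and the monotonicity lemmas are generally useful facts. Both are sound derivations in the axiom system of Figure~\ref{fig::axioms-NetKAT}.
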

\begin{proof}
%
Consider $n \in \mathbb{N}$. First, observe that
\begin{equation}\label{eq:rew-star-n-plus-one}
\begin{array}{l}
\inn \cdot (p\cdot t)^*\cdot \out \equiv\\
\inn \cdot (1 + p\cdot t + (p\cdot t)^2 + \ldots + (p\cdot t)^{n} + (p\cdot t)^{n+1}\cdot (p\cdot t)^* )\cdot \out 
\end{array}
\end{equation}
by {\small KA-UNROLL-L, KA-UNROLL-R, KA-PLUS-IDEM} and {\small KA-SEQ-DIST-L, KA-SEQ-DIST-R}.
Consequently, by Lemma~\ref{lm::expand-m}, the following also holds:
\begin{equation}\label{eq:rew-star-n-plus-one-lm1}
\inn \cdot (p\cdot t)^* \cdot \out \equiv \inn \cdot (1 + p\cdot t)^{n} \cdot \out +  \inn \cdot (p\cdot t)^{n+1}\cdot (p\cdot t)^*\cdot \out
\end{equation}
Therefore,
\[
\inn \cdot (1 + p\cdot t)^{n}\cdot \out \,\, \leq \,\, \inn \cdot(p\cdot t)^*.\out
\]
holds by the definition of the partial order relation $\leq$.
\end{proof}

\begin{theorem}(Approximation Principle for Safety)\label{thm::star-elim-safety}
Assume a network topology $t$, a switch policy $p$, an ingress policy $\inn$, and an egress policy $\out$ encoding the hazard. The following holds:
\begin{equation}\label{eq:star-elim-safe}
\vdash \inn \cdot (p\cdot t)^*\cdot \out \equiv 0 {\textnormal{~~iff~~}} \forall n \in \mathbb{N} . \vdash \inn \cdot (1 + p\cdot t)^n\cdot \out \equiv 0
\end{equation}
\end{theorem}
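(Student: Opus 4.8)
The plan is to prove the two directions of the biconditional separately, using Lemma~\ref{lm::expand-m-star} for the forward implication and Lemma~\ref{lm::expand-m} together with soundness and completeness of the NetKAT axiomatisation for the converse.

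For the direction from left to right I would argue purely equationally. Assume $\vdash \inn \cdot (p\cdot t)^*\cdot \out \equiv 0$ and fix an arbitrary $n \in \mathbb{N}$. By Lemma~\ref{lm::expand-m-star} we have $\inn \cdot (1 + p\cdot t)^n\cdot \out \leq \inn \cdot (p\cdot t)^*\cdot \out$, and the right-hand side is provably $0$ by hypothesis. Since $0$ is the least element of the natural order (recall $e \leq f$ abbreviates $e + f \equiv f$, whence $0 \leq e$ for every $e$ and $e \leq 0$ forces $e \equiv 0$), substituting into the inequality and applying {\small KA-PLUS-ZERO} gives $\inn \cdot (1 + p\cdot t)^n\cdot \out \equiv 0$. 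As $n$ was arbitrary, this establishes the right-hand side of the biconditional.

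For the converse, the first step is to observe that the infinitely many hypotheses collapse to statements about individual powers. Assume $\vdash \inn \cdot (1 + p\cdot t)^n\cdot \out \equiv 0$ for every $n$. By Lemma~\ref{lm::expand-m} and {\small KA-SEQ-DIST-L}, {\small KA-SEQ-DIST-R}, each such expression is provably equal to the sum $\inn \cdot \out + \inn \cdot (p\cdot t) \cdot \out + \ldots + \inn \cdot (p\cdot t)^n \cdot \out$. Since every summand lies below a sum that is provably $0$, each summand $\inn \cdot (p\cdot t)^k \cdot \out$ is itself provably $0$; taking $n = k$ for each $k$ yields $\vdash \inn \cdot (p\cdot t)^k \cdot \out \equiv 0$ for all $k \in \mathbb{N}$.

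The remaining and genuinely harder step is to turn these infinitely many facts about finite powers into a single statement about the Kleene star, and here a purely finitary fixpoint argument is blocked: the axioms {\small KA-LFP-L} and {\small KA-LFP-R} would require bounding $(p\cdot t)^* \cdot \out$ by $0$, but $\out$ itself is not provably $0$, so the star cannot be peeled off while keeping $\inn$ and $\out$ sandwiched around it. I therefore plan to route through the semantics, exploiting soundness and completeness. By soundness, $\llbracket \inn \cdot (p\cdot t)^k \cdot \out \rrbracket$ is the constantly-empty function for every $k$. Unfolding the semantics of the star in Figure~\ref{fig::netKAT-syn-sem}, namely $\llbracket (p\cdot t)^* \rrbracket\, h = \bigcup_{i \in \mathbb{N}} F^i\, h$, and distributing Kleisli composition over the union, one obtains $\llbracket \inn \cdot (p\cdot t)^* \cdot \out \rrbracket = \bigcup_{k \in \mathbb{N}} \llbracket \inn \cdot (p\cdot t)^k \cdot \out \rrbracket = \llbracket 0 \rrbracket$, and completeness then yields $\vdash \inn \cdot (p\cdot t)^* \cdot \out \equiv 0$. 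The main obstacle is precisely this last reduction: the infinitary hypothesis together with the fact that general Kleene algebra need not be $*$-continuous means the star equation cannot be derived equationally from finitely many approximations, so the argument must pass through the standard NetKAT model, where the star does coincide with the union of its finite iterates.
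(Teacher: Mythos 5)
Your proposal is correct and follows essentially the same route as the paper: the forward direction via Lemma~\ref{lm::expand-m-star} together with the fact that $0$ is the least element of the natural order, and the converse via Lemma~\ref{lm::expand-m} plus the fact that in the standard model the Kleene star is the union of its finite iterates. The only difference is presentational --- the paper argues the converse by reductio ad absurdum and compresses your explicit soundness/completeness detour into the phrase ``by the definition of the Kleene~$*$'', whereas you make that semantic appeal (and the reason the purely equational fixpoint axioms do not suffice) explicit.
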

\begin{proof}
The ``if'' case follows immediately, as by Lemma~\ref{lm::expand-m-star}, the hypothesis $\inn \cdot (p\cdot t)^*\cdot \out \equiv 0$ and the fact that $0 \leq q$ for all NetKAT policies $q$, the following holds:
\[
\forall n \in \mathbb{N}.~0 \leq \inn \cdot (1 + p\cdot t)^{n}\cdot \out \,\, \leq \,\, \inn \cdot (p\cdot t)^*\cdot \out \equiv 0.
\]

For the ``only if'' case we proceed by reductio ad absurdum.

Assume $\forall n \in \mathbb{N} . \vdash \inn \cdot (1 + p\cdot t)^n\cdot \out \equiv 0$ and
\begin{equation}\label{eq::red-abs}
\inn \cdot (p\cdot t)^*\cdot \out \not \equiv 0.
\end{equation}
By the definition of the Kleene $*$ and the assumption in~(\ref{eq::red-abs}), it follows that there exists $m \in \mathbb{N}$ such that:
\[
\inn \cdot (p\cdot t)^m\cdot \out \not \equiv 0.
\]
By Lemma~\ref{lm::expand-m}, we can see that the latter contradicts the hypothesis. Hence, our assumption is false.
\end{proof}

\begin{remark}[Construction of $\svdash$]\label{rm::new-entail}
With these ingredients at hand, in accordance with $Q_1$ and $Q_2$, we consider an alteration of the NetKAT axiomatisation.
Recall that our NetKAT policies do not use {\bf dup}. Our approach is purely syntactic (it does not involve network packet analysis) and 
it looks at each operation within a NetKAT expression, in a ``small-step" fashion. This can be achieved by removing the axioms {\small PA-MOD-MOD} and {\small PA-FILTER-MOD}.

Let $\svdash$ be the new entailment relation over the modified axiomatisation.
\end{remark}

\begin{remark}\label{rm::compl-lost}
Note that $\svdash$ is no longer complete. Nevertheless, the purpose of $\svdash$ is not to prove equivalence of arbitrary $\NetKATdup$, but to identify safety failure violations and corresponding explanations. In what follows, we show a series of useful/interesting properties of $\svdash$.
\end{remark}

\begin{theorem}[Consistency of $\svdash$]\label{def::consistent-entail}
Assume a $\NetKATdup$ policy p. The following holds:
\begin{equation}\label{eq:consistent-entail}
\vdash p \equiv 0 \textnormal{ iff } \svdash p \equiv 0
\end{equation}
\end{theorem}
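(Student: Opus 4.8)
The plan is to prove the two implications separately. The direction $\svdash p \equiv 0 \Rightarrow\; \vdash p \equiv 0$ is immediate: by Remark~\ref{rm::new-entail} the axiomatisation underlying $\svdash$ is obtained from the full NetKAT axiomatisation by deleting only PA-MOD-MOD and PA-FILTER-MOD, so every $\svdash$-derivation is already a $\vdash$-derivation. Hence no work is needed in this direction. The substantial claim is $\vdash p \equiv 0 \Rightarrow\; \svdash p \equiv 0$, i.e.\ that the two deleted axioms are never \emph{needed} to prove a $\NetKATdup$ policy equal to $0$. Before the main argument I would record two soundness facts: by soundness of NetKAT, $\vdash p \equiv 0$ entails $\llbracket p\rrbracket = \mathbf 0$ (the everywhere-empty function); and since $\svdash$ uses a subset of the NetKAT axioms with the same equational inference rules, $\svdash$ is itself sound, so $\svdash e \equiv e'$ implies $\llbracket e\rrbracket = \llbracket e'\rrbracket$.

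The core is a normalisation carried out \emph{entirely within} $\svdash$. Since $p$ is $*$-free and $\dup$-free, I would first use the retained Kleene distributivity and Boolean axioms, together with PA-MATCH-ALL and PA-CONTRA to rewrite each $\neg(f=n)$ as $\sum_{m\neq n} f=m$, to push $p$ into a sum of \emph{molecules}, each molecule being a product of primitive tests $f=n$ and assignments $f\leftarrow n$. Within a molecule the retained commutativity axioms PA-MOD-MOD-COMM, PA-MOD-FILTER-COMM and BA-SEQ-COMM let me group operations on the same field while keeping operations on distinct fields independent. The key observation is that every per-field \emph{contradiction} is already detectable in $\svdash$: two inconsistent tests $f=n\cdot f=n'$ with $n\neq n'$ reduce to $0$ by PA-CONTRA, and an assignment followed by an inconsistent test $f\leftarrow m\cdot f=n$ with $n\neq m$ reduces to $0$ because PA-MOD-FILTER gives $f\leftarrow m \equiv f\leftarrow m\cdot f=m$ and then PA-CONTRA applies; the resulting $0$ propagates through the product by the zero axioms. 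Thus $\svdash$ rewrites $p$ into $\sum_{i\in I} M_i$ where each surviving $M_i$ is \emph{contradiction-free}. Crucially, the two deleted axioms play no role: PA-MOD-MOD only collapses consecutive assignments and PA-FILTER-MOD only erases an assignment after a matching test, and neither operation is ever required to expose a contradiction.

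To conclude I would show that a contradiction-free molecule is semantically non-empty. Since operations on distinct fields are independent, a molecule is satisfiable iff each single-field subsequence is, and a single-field subsequence fails to be satisfiable precisely when it contains one of the two contradiction patterns above; being contradiction-free, one can therefore choose an input value for each field consistent with its tests and intervening assignments, yielding an input packet $pk$ with $\llbracket M\rrbracket(pk)\neq\emptyset$ for each surviving $M$. Now assume $\vdash p\equiv 0$, so $\llbracket p\rrbracket=\mathbf 0$, while by $\svdash$-soundness $\llbracket p\rrbracket=\bigcup_{i\in I}\llbracket M_i\rrbracket$. If some $M_j$ survived, i.e.\ $I\neq\emptyset$, it would be contradiction-free and hence satisfiable, forcing $\llbracket p\rrbracket\neq\mathbf 0$, a contradiction. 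Therefore $I=\emptyset$ and $\svdash p\equiv\sum_{i\in I}M_i\equiv 0$.

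The main obstacle is the normalisation/contradiction-detection lemma underpinning the second paragraph: making precise that $\svdash$ reduces \emph{every} unsatisfiable molecule to $0$ using only the retained axioms, and dually that every contradiction-free molecule is satisfiable. This is exactly the formal content of the intuition that PA-MOD-MOD and PA-FILTER-MOD are pure simplification axioms, irrelevant to deciding equivalence with $0$; the delicate point is handling a same-field test that must be checked against the \emph{most recent} assignment when further assignments precede it, which is resolved by PA-MOD-FILTER rather than by the removed PA-MOD-MOD.
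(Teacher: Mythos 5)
Your proposal is correct in outline, but it takes a genuinely different route from the paper. The paper's proof is purely proof-theoretic and quite terse: it observes that $0$-terms can only be introduced by the BA-CONTRA and PA-CONTRA axioms, and then argues that the two deleted axioms can never be essential to exposing such a contradiction --- PA-MOD-MOD interacts with tests only through PA-MOD-FILTER, and the resulting term $f\leftarrow n\cdot f{=}n\cdot f\leftarrow n'\cdot f{=}n'$ cannot reduce to $0$ because commutation of assignments past tests is restricted to distinct fields, while PA-FILTER-MOD never deletes a test. Your argument instead detours through the semantics: you normalise $p$ \emph{inside} $\svdash$ to a sum of molecules, show that $\svdash$ already sends every unsatisfiable molecule to $0$ (PA-MOD-FILTER feeding PA-CONTRA), prove that contradiction-free molecules are satisfiable, and then invoke soundness of $\vdash$ to conclude that no molecule can survive when $\vdash p\equiv 0$. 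What your approach buys is rigour: the paper's key claim that ``$0$-terms can only be derived according to BA/PA-CONTRA'' is a global statement about all equational derivations and is asserted rather than proved, whereas your satisfiability characterisation of normal forms is a concrete, checkable invariant, and your handling of the trivial direction ($\svdash$ uses a subset of the axioms) is explicit where the paper is silent. The cost is that the normalisation/contradiction-detection lemma you flag at the end is the real content of the theorem and must actually be written out --- in particular that per-field subsequences can be isolated using only the retained commutativity axioms, and that a test is always adjudicated against the \emph{most recent} assignment via PA-MOD-FILTER rather than the removed PA-MOD-MOD; you have correctly identified this as the delicate point, so nothing in your plan would fail, but the proof is not complete until that lemma is proved.
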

\begin{proof}
The key observation behind this proof is that $0$-terms can only be derived according to the {\small BA/PA-CONTRA} axioms:
\[
\begin{array}{rcll}
a \cdot \neg a & \equiv & 0 &\\
f = n \cdot f = n' & \equiv & 0 & \textnormal{ if } n \not = n'
\end{array}
\]
The removed axiom {\small PA-MOD-MOD}
\[
f \leftarrow n \cdot f \leftarrow n' \equiv f \leftarrow n'
\]
can only involve tests when used in combination with the {\small PA-MOD-FILTER} axiom:
\[
f \leftarrow n \cdot f = n \equiv f \leftarrow n
\]
This implies:
\[
f \leftarrow n \cdot f \leftarrow n' \equiv f \leftarrow n \cdot f = n \cdot f \leftarrow n' \cdot f = n'
\]
Nevertheless, the right hand side of the above reduction can never be evaluated to $0$ as commutativity of $\leftarrow$ and $=$ is only allowed in the context of different fields, according to {small PA-MOD-FILTER-COMM}:
\[
f \leftarrow n \cdot f' = n'  \equiv  f' = n' \cdot f \leftarrow n \textnormal{~~if } f \not = f'
\]

Moreover, it is straightforward to see that {\small PA-FILTER-MOD}
\[
f = n \cdot f \leftarrow n \equiv f = n
\]
has no influence on the evaluation to $0$-terms, as tests are not removed by this axiom.

It is, therefore, safe to conclude that~(\ref{eq:consistent-entail}) holds.
\end{proof}

Hence, according to Theorem~\ref{thm::star-elim-safety} and Theorem~\ref{def::consistent-entail}, we can conclude that a network behaviour is ``in-out-safe'' whenever it can be proven so according to $\svdash$:

\begin{corollary}[Safety Sound \& Complete]\label{cor::safe-compl}
Assume the $\NetKATdup$ policies encoding a network topology $t$, a switch policy $p$, an ingress policy $\inn$, and an egress policy $\out$ encoding the hazard. The following holds:
\begin{equation}\label{eq:consistent-entail2}
\vdash \inn \cdot (p\cdot t)^*\cdot \out \equiv 0 \textnormal{ iff }
\forall n \in \mathbb{N}.~\svdash \inn \cdot (1 + p\cdot t)^n\cdot \out \equiv 0
\end{equation}
\end{corollary}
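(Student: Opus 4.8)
The plan is to obtain the statement as a direct composition of the two results already established, namely Theorem~\ref{thm::star-elim-safety} (the Approximation Principle) and Theorem~\ref{def::consistent-entail} (Consistency of $\svdash$). The left-hand side of~(\ref{eq:consistent-entail2}) concerns the original entailment $\vdash$ on a $*$-containing expression, while the right-hand side concerns the modified entailment $\svdash$ on the family of finite unfoldings; the bridge between them is precisely to pass through the finite unfoldings under $\vdash$ first, and only then transfer to $\svdash$ pointwise in $n$.

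First I would apply Theorem~\ref{thm::star-elim-safety} to rewrite the left-hand side. This yields
\[
\vdash \inn \cdot (p\cdot t)^*\cdot \out \equiv 0 \quad\textnormal{iff}\quad \forall n \in \mathbb{N}.~\vdash \inn \cdot (1 + p\cdot t)^n\cdot \out \equiv 0,
\]
reducing the problem to relating $\vdash$ and $\svdash$ on each fixed finite unfolding. Next I would observe the crucial syntactic fact that, for every $n$, the expression $\inn \cdot (1 + p\cdot t)^n\cdot \out$ is a genuine $\NetKATdup$ policy: since $\inn,\,p,\,t,\,\out$ are all $\dup$-free and $*$-free by hypothesis, and $(1 + p\cdot t)^n$ denotes a finite sequential product (by the definition of $q^m$ given before Lemma~\ref{lm::expand-m}), the whole expression is again $\dup$-free and $*$-free. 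This membership is exactly the side condition required to invoke Theorem~\ref{def::consistent-entail}.

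With this in place, for each $n \in \mathbb{N}$ I would instantiate Theorem~\ref{def::consistent-entail} with the policy $\inn \cdot (1 + p\cdot t)^n\cdot \out$, obtaining
\[
\vdash \inn \cdot (1 + p\cdot t)^n\cdot \out \equiv 0 \quad\textnormal{iff}\quad \svdash \inn \cdot (1 + p\cdot t)^n\cdot \out \equiv 0 .
\]
Universally quantifying over $n$ and chaining with the equivalence from Theorem~\ref{thm::star-elim-safety} then delivers~(\ref{eq:consistent-entail2}) immediately.

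I do not expect a genuine obstacle in the calculation, but the one point that must not be glossed over is \emph{why the detour through finite unfoldings is necessary}. By Remark~\ref{rm::compl-lost}, $\svdash$ is no longer complete, so one cannot apply Theorem~\ref{def::consistent-entail} directly to the $*$-expression $\inn \cdot (p\cdot t)^*\cdot \out$ (which is in any case not a $\NetKATdup$ policy, being star-containing). It is the Approximation Principle that first turns the star into the $\NetKATdup$-expressible family $\{\inn \cdot (1 + p\cdot t)^n\cdot \out\}_{n}$, on which the consistency result is legitimately applicable termwise. Thus the only real content to verify carefully is the syntactic-class check in the second step, ensuring that each unfolding lies in the fragment for which Theorem~\ref{def::consistent-entail} was proven.
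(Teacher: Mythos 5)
Your proposal is correct and matches the paper's own derivation exactly: the paper obtains Corollary~\ref{cor::safe-compl} by composing Theorem~\ref{thm::star-elim-safety} with a termwise application of Theorem~\ref{def::consistent-entail} to each finite unfolding, just as you do. Your explicit check that each $\inn \cdot (1 + p\cdot t)^n\cdot \out$ lies in $\NetKATdup$ is a detail the paper leaves implicit, but it is the same argument.
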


As previously stated, our experimental evaluation showed that in most of the cases it suffices to consider a limited number of $*$-unfoldings equal to the size $n$ of the network, in order to reveal all the possible ways of reaching a hazardous egress $\out$ from a given ingress $\inn$. In accordance, we introduce a notion of so-called $n$-safety failure explanations.

\begin{definition}[$n$-Safety Failure Explanations]\label{def::safety-fail-expl}
Assume the $\NetKATdup$ policies encoding
a network topology $t$, a switch policy $p$, an ingress policy $\inn $, and an egress policy $\out$ encoding the hazard. An \emph{$n$-safety failure explanation} is a policy {${\it expl} \not \equiv 0$} such that, for $n\in \mathbb{N}$:
\begin{equation}\label{eq:star-elim-safe2}
\svdash \inn \cdot (1 + p\cdot t)^n\cdot \out \equiv {\it expl}.
\end{equation}
\end{definition}

For an example, we refer to the case of the two programmers providing switch policies $p_1$ and $p_2$ forwarding packets from host $H_1$ to $H_2$, and from $H_3$ to $H_4$ within the network in Figure~\ref{fig::simpl-net}. As previously discussed, the end-to-end network behaviour defined over each of the aforementioned policies can be proven correct using the NetKAT axiomatisation. Nevertheless, a comprehensive explanation of what caused the erroneous behaviour over the unified policy $p_1 + p_2$ could not be derived according $\vdash$. Note that the network consists of $6$ forwarding links. Hence, $6$ unfoldings were sufficient for the new axiomatisation to entail the following explanation:
\[
\svdash (\pt = 1) \cdot ((p_1 + p_2) \cdot t)^6 \cdot (\pt = 3 + \pt = 4) \equiv \pt = 1 \cdot \pt \leftarrow 5 \cdot \pt \leftarrow 6 \cdot \pt \leftarrow 4
\]
showing how packets at port $1$ can reach port $4$.
Similarly,
\[
\svdash (\pt = 3) \cdot ((p_1 + p_2) \cdot t)^6 \cdot (\pt = 1 + \pt = 2) \equiv \pt = 3 \cdot \pt \leftarrow 5 \cdot \pt \leftarrow 6 \cdot \pt \leftarrow 2
\]
shows how packets at port $3$ can reach port $2$. 

\begin{remark}\label{rm::no-dups}
The work in~\cite{DBLP:conf/popl/AndersonFGJKSW14} proposes a ``star elimination'' method for switch policies not containing $\bf{dup}$ and switch assignments. The procedure in~\cite{DBLP:conf/popl/AndersonFGJKSW14} employs a notion of normal form to which each NetKAT policy can be reduced. The reason for not using the aforementioned star elimination in our context is that the normal forms in~\cite{DBLP:conf/popl/AndersonFGJKSW14} ``forget'' the intermediate sequences of assignments and tests, and reduce policies to sums of expressions of shape $(f_1 = v_1 .\, \ldots \,. f_n = v_n) \cdot (f_1 \leftarrow v'_1 .\, \ldots \,. f_n \leftarrow v'_n)$ where $f_1, \ldots, f_n$ are the packet fields. Hence, the normal forms exploited by the star elimination in~\cite{DBLP:conf/popl/AndersonFGJKSW14} can not serve as comprehensive failure explanations.
\end{remark}

\begin{figure}[h] 
               \centering \begin{tikzpicture}[>=stealth,
   shorten >=1pt,auto,node distance=4cm,switch node/.style={circle,draw}, host node/.style={rectangle,draw}]
                  \node[host node] (h1) {$H_1$};
                  \node[switch node] (s1) [right of=h1] {$A$};
                  \node[switch node] (s2) [above of=s1] {$B$};
                  \node[host node] (h2) [right of=s1] {$H_2$};
                  
                  \path[->]
                    (h1) edge [pos=0.9] node [yshift=-0.1cm, below] {$1$} (s1)
                    (s1) edge [pos=0.05] node [yshift=-0.1cm, below] {$3$} (h2);

                \path[->, transform canvas={shift={(-0.1,0)}}]
                    (s1) edge [pos=0.1] node [left] {$2$} (s2)
                    (s1) edge [pos=0.9] node [left] {$4$} (s2);

                \path[->, transform canvas={shift={(0.1,0)}}]
                    (s2) edge [pos=0.1] node [right] {$5$} (s1)
                    (s2) edge [pos=0.9] node [right] {$6$} (s1);

                \end{tikzpicture}
            \caption{A Firewall} \label{fig::firewall-example}
            \end{figure}
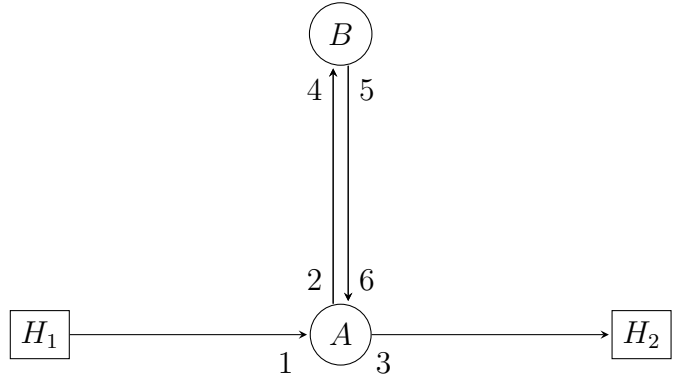
            
We next provide an additional firewall example to better illustrate the ideas in Remark~\ref{rm::no-dups}. 
Consider a scenario where there are two hosts $H_1$ and $H_2$, a switch $A$, and a firewall $B$, as displayed in Figure~\ref{fig::firewall-example}. In this setting the packets that reach $A$ are first forwarded to the firewall, and then to their destination, and the firewall blocks all non-SSH traffic. The policy and the topology are defined as follows:

\[
\begin{array}{ll}
 p \triangleq & sw = A \cdot (dst = H_2 \cdot firewalled = 0 \cdot pt \leftarrow 2 + \\
 &  ~~~~~~~~~~~~~dst = H_2 \cdot firewalled = 1 \cdot pt \leftarrow 3) + \\
 &  sw = B \cdot (typ = SSH \cdot firewalled \leftarrow 1 \cdot pt \leftarrow 5) \\ \\
 t \triangleq & sw = A \cdot (pt = 2 \cdot sw \leftarrow B \cdot pt \leftarrow 4 + pt = 1 + pt = 3) + \\
& sw = B \cdot pt = 5 \cdot sw \leftarrow A \cdot pt \leftarrow 6
\end{array}
\]

Assume that packets from $H_1$ reaching to $H_2$ constitutes a safety violation. The $in$ and $out$ are defined as follows:
\begin{alignat}{2}
in \triangleq &~sw = A \cdot pt = 1 \cdot dst = H_2 \cdot firewalled = 0 \notag \\
out \triangleq &~sw = A \cdot pt = 3\notag
\end{alignat}

Generally speaking, we are interested to check whether $\inn \cdot (p \cdot t)^* \cdot \out$ reduces to $0$ (indicating the absence of the hazard) or not. Based on the framework devised in this paper, this reduces to checking the aforementioned equalities after unfolding the expression $(p \cdot t)^*$ for a number of times equal to the number of (oriented) links in the network.
It is clear that in our case we are interested to check whether $in \cdot (p \cdot t)^4 \cdot out \equiv 0$ or not. Our framework gives the following counterexample:   \begin{alignat}{2}
& sw = A \cdot pt = 1 \cdot dst = H_2 \cdot  firewalled = 0 \cdot\notag \\ 
& pt \leftarrow 2 \cdot sw \leftarrow B \cdot 
pt \leftarrow 4 \cdot \notag \\ 
&  typ = SSH \cdot firewalled \leftarrow 1 \cdot pt \leftarrow 5 \cdot sw \leftarrow A \cdot  pt \leftarrow 6 \cdot \notag \\
& pt \leftarrow 3 \notag
\end{alignat}

\begin{remark}
In~\cite{DBLP:conf/popl/AndersonFGJKSW14}, the completeness theorem of NetKAT is based on a language model:
\begin{equation}
    \alpha \cdot \pi_0 \cdot~\mathbf{dup}~\cdot \pi_1 \cdot~\mathbf{dup}~ \ldots~\mathbf{dup} \cdot \pi_n
\end{equation}
where $\alpha \triangleq f_1=n_1 \ldots f_k=n_k$ is called a complete test and $\pi \triangleq f_1\leftarrow n_1 \ldots f_k\leftarrow n_k$ is called a complete assignment. Note that the axiom that we removed, \small{PA-MOD-MOD}, plays an important role in bringing the expressions into this form. If we had strictly followed the approach in~\cite{DBLP:conf/popl/AndersonFGJKSW14}, then for the above firewall example we would have obtained a counterexample of the following shape:
\begin{equation}\label{eq::long-conter}
\begin{array}{l}
(sw = A \cdot pt = 1 \cdot dst = H_2 \cdot typ = SSH \cdot firewalled = 0) \cdot \\
(sw \leftarrow A \cdot pt \leftarrow 1 \cdot dst \leftarrow H_2 \cdot typ \leftarrow SSH \cdot firewalled \leftarrow 0) \cdot \mathbf{dup} \cdot  \\
(sw \leftarrow B \cdot pt \leftarrow 4 \cdot dst \leftarrow H_2  \cdot typ \leftarrow SSH \cdot firewalled \leftarrow 0) \cdot \mathbf{dup} \cdot  \\
(sw \leftarrow A \cdot pt \leftarrow 6 \cdot dst \leftarrow H_2 \cdot typ \leftarrow SSH \cdot firewalled \leftarrow 1) \cdot \mathbf{dup} \cdot \\
(sw \leftarrow A \cdot pt \leftarrow 3 \cdot dst \leftarrow H_2 \cdot typ \leftarrow SSH \cdot firewalled \leftarrow 1)
\end{array}
\end{equation}
Observe that a more concise, $\dup$-free counterexample is obtained from our approach, which we believe is better suitable in the context of causality checking. Furthermore, certain information has been lost in the expression in~(\ref{eq::long-conter}), $i.e.$ the assignments $pt \leftarrow 2$ and $pt \leftarrow 5$ do not appear in the counterexample. More generally, if there exist more than one assignment to a field inside $p \cdot t$, then only the last assignment is preserved. We believe this is not favorable for causality checking.
\end{remark}

\subsection{Minimal Explanations}\label{sec:minimality}

Note that the safety failure explanations in Definition~\ref{def::safety-fail-expl} are not minimal.
For an example, there might be cases in which two explanation paths of shape
\[
\begin{array}{rl}
e_1 \triangleq p'\cdot p'' ~~~~~~~~~~~~ e_2 \triangleq p'\cdot \tilde{p}\cdot p''
\end{array}
\]
are identified. In this case, we consider $e_1$ as more ``expressive" than $e_2$. In this section we introduce a notion of minimality, inspired by the seminal works on causal reasoning in~\cite{DBLP:conf/sum/Halpern11,DBLP:conf/ijcai/Halpern15}.
We define minimality based on a notion of NetKAT \emph{normal forms for safety} (NFS). These normal forms are derived based on the additional equalities in Theorem~\ref{thm:neg-additional}.

\begin{theorem}[Distribution of $\neg$]\label{thm:neg-additional}
Let $a$, $b$ and $f = n_i$ for $i \in \{1, \ldots, m\}$ stand for NetKAT predicates as in Figure~\ref{fig::netKAT-syn-sem}. The following hold:
\[
\begin{array}{rcll}
\neg 1 & \equiv & 0 & \textnormal{NEG-ONE}\\ 
\neg 0 & \equiv & 1 & \textnormal{NEG-ZERO}\\
\neg (\neg a) & \equiv & a & \textnormal{NEG-NEG}\\
\neg (f = n_i) & \equiv & \Sigma_{j \not = i} f = n_j & \textnormal{NEG-ELIM}\\
\neg (a + b) & \equiv & (\neg a) \cdot (\neg b) & \textnormal{DIST-NEG-DISJ}\\
\neg (a \cdot b) & \equiv & (\neg a) + (\neg b) & \textnormal{DIST-NEG-CONJ}
\end{array}
\]
\end{theorem}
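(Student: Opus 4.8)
The plan is to treat all six identities as facts about the Boolean algebra of NetKAT predicates, which is governed by the BA axioms (together with the ambient KA axioms) in Figure~\ref{fig::axioms-NetKAT}. Four of the six identities (NEG-ONE, NEG-ZERO, NEG-NEG, and the two De~Morgan laws DIST-NEG-DISJ and DIST-NEG-CONJ) are standard theorems of Boolean algebra, while NEG-ELIM is the only one genuinely requiring the packet-specific axioms. The uniform tool I would establish first is \emph{uniqueness of complements}: if $a + x \equiv 1$, $a \cdot x \equiv 0$, $a + y \equiv 1$ and $a \cdot y \equiv 0$, then $x \equiv y$. This follows by the familiar computation $x \equiv x \cdot 1 \equiv x \cdot (a + y) \equiv x \cdot a + x \cdot y \equiv 0 + x \cdot y \equiv x \cdot y$ (using KA-SEQ-ONE, the hypothesis $a + y \equiv 1$, distributivity KA-SEQ-DIST-L, BA-SEQ-COMM applied to $a \cdot x \equiv 0$, and KA-PLUS-ZERO), and symmetrically $y \equiv x \cdot y$, whence $x \equiv y$. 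Since $\neg a$ always satisfies $a + \neg a \equiv 1$ (BA-EXCL-MID) and $a \cdot \neg a \equiv 0$ (BA-CONTRA), proving any identity $\neg e \equiv g$ reduces to checking the two defining equations $e + g \equiv 1$ and $e \cdot g \equiv 0$.

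With this lemma in hand, the base cases are immediate. For NEG-ONE I would use $1 \cdot \neg 1 \equiv \neg 1$ (KA-ONE-SEQ) together with $1 \cdot \neg 1 \equiv 0$ (BA-CONTRA) to conclude $\neg 1 \equiv 0$; NEG-ZERO follows dually from $0 + \neg 0 \equiv \neg 0$ (KA-PLUS-COMM, KA-PLUS-ZERO) and $0 + \neg 0 \equiv 1$ (BA-EXCL-MID). For NEG-NEG, observe that both $a$ and $\neg\neg a$ are complements of $\neg a$: the equations $\neg a + a \equiv 1$ and $\neg a \cdot a \equiv 0$ hold by BA-EXCL-MID and BA-CONTRA (after commuting), while $\neg a + \neg\neg a \equiv 1$ and $\neg a \cdot \neg\neg a \equiv 0$ hold by the same axioms instantiated at $\neg a$; uniqueness then gives $a \equiv \neg\neg a$.

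For the two De~Morgan laws I would verify the defining equations directly. For DIST-NEG-DISJ, the sum $(a+b) + (\neg a \cdot \neg b)$ rewrites by BA-PLUS-DIST to $((a+b)+\neg a) \cdot ((a+b)+\neg b)$; reassociating and using BA-EXCL-MID and BA-PLUS-ONE, each factor collapses to $1$, so the product is $1$, while the product $(a+b)\cdot(\neg a \cdot \neg b)$ distributes (KA-SEQ-DIST-R) into two summands each containing a pair $a \cdot \neg a$ or $b \cdot \neg b$, hence each is $0$ by BA-CONTRA (after commuting with BA-SEQ-COMM), giving $0$; uniqueness yields $\neg(a+b) \equiv \neg a \cdot \neg b$. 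I would then obtain DIST-NEG-CONJ most economically by applying the already-proven DIST-NEG-DISJ to $\neg a$ and $\neg b$ followed by NEG-NEG: $\neg(\neg a + \neg b) \equiv \neg\neg a \cdot \neg\neg b \equiv a \cdot b$, and negating both sides via NEG-NEG gives $\neg a + \neg b \equiv \neg(a \cdot b)$.

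The only step that leaves pure Boolean algebra, and thus the main point to get right, is NEG-ELIM. Here I would again use uniqueness of complements, taking $g \triangleq \Sigma_{j \neq i}\, f = n_j$ as the candidate for $\neg(f = n_i)$. The sum $(f = n_i) + \Sigma_{j \neq i}(f = n_j)$ is the full sum $\Sigma_j (f = n_j)$, which is $\equiv 1$ by PA-MATCH-ALL; the product $(f = n_i) \cdot \Sigma_{j \neq i}(f = n_j)$ distributes (KA-SEQ-DIST-L) to $\Sigma_{j \neq i}\big((f = n_i) \cdot (f = n_j)\big)$, and each summand is $\equiv 0$ by PA-CONTRA since $n_i \neq n_j$, so the whole product collapses to $0$ (KA-PLUS-IDEM, KA-PLUS-ZERO). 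Uniqueness then gives $\neg(f = n_i) \equiv \Sigma_{j \neq i}(f = n_j)$. The subtlety worth flagging is that this identity relies essentially on PA-MATCH-ALL, i.e.\ on the field $f$ ranging over finitely many values whose complete sum is the identity predicate; it is exactly this finiteness, absent from the generic Boolean-algebra reasoning used for the other five laws, that makes NEG-ELIM the non-routine case.
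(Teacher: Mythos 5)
Your proof is correct, but it takes a genuinely different route from the paper's. The paper argues \emph{semantically}: it verifies each identity pointwise on histories using the denotational semantics of Figure~\ref{fig::netKAT-syn-sem} (e.g.\ $\llbracket \neg 1\rrbracket h = \{h\}\setminus\{h\} = \{\} = \llbracket 0\rrbracket h$ for NEG-ONE), and then implicitly relies on the completeness of the NetKAT axiomatisation to read off provable equivalence. You instead derive all six identities \emph{syntactically} from the axioms of Figure~\ref{fig::axioms-NetKAT}, organised around a uniqueness-of-complements lemma (itself a correct consequence of KA-SEQ-ONE, KA-SEQ-DIST-L, BA-SEQ-COMM and KA-PLUS-ZERO), with BA-EXCL-MID and BA-CONTRA supplying the two defining equations of $\neg a$ in each case, and PA-MATCH-ALL plus PA-CONTRA handling the one packet-specific law NEG-ELIM. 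The semantic route is shorter and is what the paper chose for a proof sketch; your route is more self-contained and arguably better suited to the paper's actual use of these identities: since the derivation avoids the removed axioms {\small PA-MOD-MOD} and {\small PA-FILTER-MOD}, it shows the equalities are already derivable under the weakened entailment $\svdash$ of Remark~\ref{rm::new-entail}, which the semantics-plus-completeness argument cannot conclude because $\svdash$ is not complete (Remark~\ref{rm::compl-lost}). Your observation that NEG-ELIM is the only law essentially depending on finiteness of the value domain (via PA-MATCH-ALL) matches the paper's own caveat in Section~\ref{seq:imm-chal} that field values are drawn from finite domains.
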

\begin{proof}[Proof Sketch]
All the above equivalences follow according to the NetKAT semantics in Figure~\ref{fig::netKAT-syn-sem}. Consider, for instance, {\small NEG-ONE}. The following holds:
\[
\begin{array}{rl}
\forall h \in {\textnormal{H}}: &\llbracket \neg 1\rrbracket h =_{(\it{def. of} \neg)}\\
& \{h\} \setminus (\llbracket 1 \rrbracket h) =_{(\it{def. of} 1)}\\
& \{h\} \setminus \{h\} = \\
& \{\} =_{(\it{def. of} 0)}\\
& \llbracket 0 \rrbracket h .
\end{array}
\]
\end{proof}

\begin{definition}[Token]\label{def::atom} We call a \emph{token} the identity policy $1$, the drop policy $0$, a test $(f = n)$, or a field modification $f \leftarrow n$.
\end{definition}

\begin{definition}[Normal Forms for Safety -- NFS]\label{def::NFS}
A NetKAT policy $p$ is in \emph{NFS} if 
\[
p \triangleq \Sigma_{i \in \{1, \ldots, m\}} ~\Pi_{j \in \{1, \ldots, n\}}tk_{i,j}
\]
with $tk_{i,j}$ a token, for all $i \in \{1, \ldots, m\}$ and $j \in \{1, \ldots, n\}$.
\end{definition}

\begin{theorem}[NFS reduction]\label{thm::NFS-red}
All policies defined over $\NetKATdup$ and repetitions can be reduced to equivalent policies in NFS.
\end{theorem}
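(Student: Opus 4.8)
The plan is to prove the statement by structural induction on the syntax of $\NetKATdup$ expressions (extended with repetitions), showing that the class of NFS policies is closed under each policy constructor. The base cases are immediate: each token $1$, $0$, $f = n$ and $f \leftarrow n$ is already in NFS, being a single summand ($m = 1$) consisting of a single factor ($n = 1$).

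For the inductive step I would first dispatch the two easy constructors. For \emph{union}, if $p$ and $q$ are in NFS then $p + q$ is again a sum of products of tokens, hence in NFS (using {\small KA-PLUS-ASSOC} and {\small KA-PLUS-COMM} only to regroup the summands). For \emph{sequential composition}, if $p \equiv \Sigma_i P_i$ and $q \equiv \Sigma_k Q_k$ with each $P_i,\, Q_k$ a product of tokens, then repeated application of {\small KA-SEQ-DIST-L} and {\small KA-SEQ-DIST-R} yields $p \cdot q \equiv \Sigma_{i,k} (P_i \cdot Q_k)$, and each $P_i \cdot Q_k$ is itself a product of tokens, namely the concatenation of the two token sequences; hence $p \cdot q$ is in NFS. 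Repetitions reduce via their defining equation: $p^0 = 1$ is a token and $p^m = p \cdot p^{m-1}$ is a sequential composition, so the repetition case follows from the composition case and the base case by a side induction on $m$.

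The crux is the \emph{negation} case, which is where Theorem~\ref{thm:neg-additional} is needed. Since $\neg$ is a predicate constructor, under any negation one only ever encounters predicate tokens ($0$, $1$, $f = n$), never modifications. Given a predicate $a$ that by the induction hypothesis is in NFS, say $a \equiv \Sigma_i \Pi_j tk_{i,j}$, I would push the negation inward: first {\small DIST-NEG-DISJ} turns the outer sum into a product, $\neg a \equiv \Pi_i \neg(\Pi_j tk_{i,j})$; then {\small DIST-NEG-CONJ} turns each inner product into a sum, $\neg(\Pi_j tk_{i,j}) \equiv \Sigma_j \neg tk_{i,j}$; finally each $\neg tk_{i,j}$ is the negation of a single predicate token, eliminated by {\small NEG-ONE} ($\neg 1 \equiv 0$), {\small NEG-ZERO} ($\neg 0 \equiv 1$), or {\small NEG-ELIM} ($\neg(f = n) \equiv \Sigma_{n' \neq n} f = n'$), the last of which produces a sum of tests and relies on {\small PA-MATCH-ALL} to guarantee the value set is finite. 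At this point every remaining negation has been pushed in front of a token and removed, so $\neg a$ has been rewritten as a negation-free combination of tokens, sums and products; applying the already-established closure of NFS under $+$ and $\cdot$ collapses this combination back into NFS.

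The main obstacle is the bookkeeping in the negation case, specifically arguing that this rewriting terminates and is well-founded. The key point is that each application of {\small DIST-NEG-DISJ} / {\small DIST-NEG-CONJ} strictly decreases the size of the subexpression sitting under a negation, while {\small NEG-ELIM}, {\small NEG-ONE} and {\small NEG-ZERO} eliminate a negation outright without introducing new ones, since their right-hand sides are negation-free (({\small NEG-NEG} collapses any double negation beforehand). A suitable termination measure — for instance the multiset of sizes of the maximal negated subexpressions — therefore strictly decreases, so the elimination halts with a negation-free expression, which the $+$ and $\cdot$ closures then normalise. A minor technical point is that Definition~\ref{def::NFS} fixes a common length $n$ for all products; products that come out shorter are padded to this length by inserting the identity token $1$, justified by {\small KA-ONE-SEQ} and {\small KA-SEQ-ONE}. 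Once every constructor is shown to preserve NFS, the structural induction is complete.
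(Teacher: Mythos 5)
Your proposal is correct and follows essentially the same route as the paper, whose own argument is only a sketch: unfold all repetitions, then induct on the structure of the resulting $\NetKATdup$ expression, using {\small KA-SEQ-DIST-L}/{\small KA-SEQ-DIST-R} for sequential composition and the equalities of Theorem~\ref{thm:neg-additional} to push negations down to (predicate) tokens and eliminate them. Your version merely fills in the details the paper leaves implicit (the negation bookkeeping, termination, and the padding of products to a common length), handling repetitions as an induction case rather than as a preprocessing step, which is an inessential difference.
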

\begin{proof}[Proof Sketch]
Let $p^u$ denote the repetition-free policy obtained from $p$ by performing all corresponding unfoldings, if any. It can be shown by induction on the structure of $p^u$ that an NFS can be obtained by applying the NetKAT axioms in Figure~\ref{fig::axioms-NetKAT}, together with the equalities in Theorem~\ref{thm:neg-additional} (in particular, {\small KA-SEQ-DIST-L} and {\small KA-SEQ-DIST-R}).
\end{proof}

\begin{definition}[$\sqsubseteq / \sqsubset$]\label{def::NFS-incl}
Let $p_i$ and $p'_j$ be NetKAT policies in NFS.
We write $p_i \sqsubseteq p'_j$ whenever $p_i$ can be obtained from $p'_j$ by deleting $k$ atoms at arbitrary positions in $p'_j$, with $k \geq 0$.
We write $p_i \sqsubset q_i$ whenever $k > 0$.

\end{definition}

\begin{definition}[Minimality]\label{def::min}
We call a policy in NFS \emph{minimal}, with
\[
p \triangleq \Sigma_{i \in \{1, \ldots, n\}} p_i
\]
whenever for all $p_j$ there is no $p_k$, with $j, k \in \{1, \ldots, n\}$ such that $p_j \sqsubset p_k$. 

Assume $p$ is in NFS, but is not minimal. We write $min(p)$ for the NFS policy obtained by removing all $p_k$, with $k \in \{1, \ldots, n\}$, such that there exists $p_j$, with $j \in \{1, \ldots, n\}$, satisfying $p_j \sqsubset p_k$.
\end{definition}

Assume an explanation ${\it expl} \not \equiv 0$ as in~(\ref{eq:star-elim-safe2}). Let ${\it expl}^{NFS}$ be {\it expl} reduced to its NFS.
The minimal explanation with respect to the violation of a safety property in NetKAT is represented by $min({\it expl}^{NFS})$.

\section{Tools for Explaining NetKAT Safety Failures}\label{sec::tool}
In this section we introduce {\tooleq}, a tool based on Maude~\cite{DBLP:conf/rta/ClavelDELMMT03},
for automatically computing relevant explanations for failures of NetKAT programs. Maude has been proven particularly suitable for defining semantics of programming languages and reasoning about their properties. The Maude tools encompass, amongst others, a suite of model checkers and the so-called Maude Formal Environment (MFE)~\cite{DBLP:conf/birthday/DuranRA11} which includes the Church-Rosser checker and the termination tool.
In short, {\tooleq} is based on Maude equational theories and it satisfies important properties such as Church-Rosser (which guarantees uniqueness of results) and termination. {\tooleq} provides all the explanations for NetKAT safety failures. 


\subsection{A Brief Overview of the Maude System}\label{sec::overview-Maude}
Maude specifications come in two flavours: (1) as functional modules, that define data types and associated operations by means of equational theories, or (2) as system modules, or rewrite theories, that specify concurrent transitions given as a set of rewrite rules, or ``oriented'' equations. Such rules are triggered whenever the rule’s left hand side matches a fragment of the system state and the rule’s condition is satisfied. In this work we utilize Maude functional modules and in the following we discuss the main aspects of Maude functional modules. We then continue with a brief overview of the MFE.

\medskip
{\it{Functional modules.}}
For an intuitive example, we next provide a Maude equational theory specifying NetKAT predicates.
First, note that a functional module is specified using the following syntax:
\begin{equation}\label{eq:fmod}
    \verb#fmod#~\emph{ModuleName}~\verb#is# ~\emph{DeclarationsAndStatements}~\verb#endfm#
\end{equation}

In our case, the module name is {\small{\verb#PREDICATE#}}, whereas the \emph{DeclarationsAndStatements} includes, amongst others, the operators defined according to the syntax in Figure~\ref{fig::netKAT-syn-sem}, and the associated axioms in Figure~\ref{fig::axioms-NetKAT}. Operators are specified over types, or Maude \emph{sorts}, defined within the current module via the keyword \verb#sort#, or imported (possibly in a ``protected'' fashion) from other modules. Properties such as associativity (\verb#assoc#), commutativity  (\verb#comm#), idempotency  (\verb#idem#), neutral elements  (\verb#id#) and precedence (\verb#prec#) can be specified as attributes of operators. Note that associativity and idempotency cannot be used together in any combination of attributes. Operators that play the role of constructors  (\verb#ctor#) for a certain type can also be specified; this is the case of all the operators defining Predicates in Figure~\ref{fig::netKAT-syn-sem}.
Variables (\verb#var#) of a certain sort can also be declared.
Possibly conditional equations are introduced using \verb#eq# or \verb#ceq#, respectively. Identifiers can be specified for equations as well. Comments are preceded by \verb#---#.

A Maude equational theory specifying NetKAT predicates and the additional boolean algebra axioms is given in Figure~\ref{fig:eqn-th-pred}.

The \emph{identity} and \emph{drop} NetKAT policies are defined in terms of two constants (or operators with arity $0$), namely, the constructors \verb#one# and \verb#zero#, respectively. \emph{Tests}, \emph{disjunction} and, respectively, \emph{conjunction} are straightforwardly implemented as the Maude binary operators \verb#_=_#, \verb#_+_# and, respectively, \verb#_._#.

Note that conjunction and disjunction are declared as associative and commutative as well. This is in accordance with the NetKAT axioms {\small KA-PLUS-ASSOC, KA-SEQ-ASSOC, KA-PLUS-COMM} and {\small BA-SEQ-COMM} in Figure~\ref{fig::axioms-NetKAT}. The advantage of using operator attributes is that Maude will efficiently perform equational reasoning modulo these attributes. \emph{Negation} is given as the unary operator \verb#~_#.
The remaining predicate axioms are specified via the equations in Figure~{{\verb#[BA-PLUS-ONE]#, \verb#[KA-PLUS-ONE]#, \verb#[KA-ONE-SEQ]#, \verb#[KA-ZERO-SEQ]#, \verb#[BA-EXCL-MID]#, \verb#[BA-CONTRA]#}} and {{\verb#[BA-SEQ-IDEM]#}}. Note that {\small KA-SEQ-ONE} and {\small KA-SEQ-ZERO} in Figure~\ref{fig::axioms-NetKAT} hold implicitly, due to the commutativity of sequential composition of NetKAT predicates.

Fields and their (natural) values are data structures defined within the corresponding Maude functional modules {\small{\verb#FIELD#}} and  {\small{\verb#NATVAL#}}, which {\small{\verb#PREDICATE#}} is importing in a protected manner.

\begin{figure}
\centering
\small{
\begin{verbatim}
(fmod PREDICATE is	
protecting FIELD .
protecting NATVAL .

sort Predicate .
var A : Predicate .

op one : -> Predicate [ctor] . 
op zero : -> Predicate [ctor] .	

op _=_ : Field NatVal -> Predicate [ctor prec 39] .
op _+_ : Predicate Predicate -> Predicate
                        [ctor assoc comm prec 43] .
op _._ : Predicate Predicate -> Predicate
                        [ctor assoc comm prec 40] .
op ~_ : Predicate -> Predicate [ctor prec 39] .

eq [BA-PLUS-ONE] : A + one = one .
eq [KA-PLUS-ZERO] : A + zero = A .
eq [KA-ONE-SEQ] : one . A = A .
eq [KA-ZERO-SEQ] : zero . A = zero .
eq [BA-EXCL-MID] : A + ~ A = one . 
eq [BA-CONTRA] : A . ~ A = zero .
eq [BA-SEQ-IDEM] : A . A = A .

eq ~ one = zero .
eq ~ zero = one .
endfm)
\end{verbatim}
}
\caption{Equational Theory of NetKAT Predicates.}
\label{fig:eqn-th-pred}
\end{figure}

\medskip
{\it{The MFE.}}
In our approach, we are using: Maude $2.7.1$ for Linux64\footnote{\url{http://maude.cs.illinois.edu/w/index.php/All_Maude_2_versions}}, MFE $1.0$b\footnote{\url{https://github.com/maude-team/MFE/wiki/How-to-use-the-tool}} including the Church-Rosser Checker (CRC) $3$p, and the Maude Termination Tool (MTT) $1.5$j, and AProVE~\cite{DBLP:journals/jar/GieslABEFFHOPSS17}. 

CRC plays a crucial role in resolving possibly different evaluations of a certain term by suggesting a series of so-called critical pairs. Intuitively, the latter are lemmas which, if proven correct, lead to a confluent equational specification. For instance, \verb#PREDICATE# is Church-Rosser because the following lemmas were soundly added to the specification of NetKAT predicates in Figure~\ref{fig:eqn-th-pred}, according to the additional equalities in Theorem~\ref{thm:neg-additional}:
\[
\begin{array}{cc}
  \verb#eq ~ one = zero .#   &~~~~~  \verb#eq ~ zero = one .#
\end{array}
\]

\subsection{Immediate Challenges and Observations}\label{seq:imm-chal}
In Figure~\ref{fig:eqn-th-pred} we presented a straightforward implementation of NetKAT predicates in Maude.
Next, we wanted to follow a similar approach and devise a Maude equational specification of NetKAT programs $\inn \cdot (1 + p \cdot t)^n \cdot \out$ as in~(\ref{eq:star-elim-safe2}). Recall that such programs are expressions defined over $\NetKATdup$ and repetitions $(-)^n$.

Typically, specifying such NetKAT policies would consist in the following straightforward steps:
\begin{enumerate}
\item Define a new sort {\small\verb#Policy#} as a suprasort of {\small\verb#Predicate#}.
\item Lift the signatures of $+$ and $\cdot$ to {\small\verb#Policy#}. 
\item Define $\leftarrow$ and the repetition operator $(-)^n$ accordingly.
\item Add the relevant set of axioms in Figure~\ref{fig::axioms-NetKAT} as Maude equations. (Recall that our approach for explaining safety failures discards the axioms for $*$, {\bf dup}, {\small PA-MOD-MOD} and {\small PA-FILTER-MOD}.) 
\end{enumerate}

Unfortunately, the recipe above was not successful.
We proceed by describing the main difficulties we encountered. 

\medskip
{\it Commutativity of $\cdot$~.}
Note that, on the one hand, the NetKAT $\cdot$ operator plays the role of conjunction in the context of predicates and is, therefore, commutative. On the other hand, $\cdot$ in the context of policies denotes sequential composition, which is not commutative. Nevertheless, the packet algebra axioms in Figure~\ref{fig::axioms-NetKAT} use $\cdot$ in a uniform fashion, thus, implicitly lifting $\cdot$ to the setting of policies as in step $2$ above. Consequently, defining in Maude two operators capturing the two different semantics of $\cdot$, and straightforwardly translating the axioms in Figure~\ref{fig::axioms-NetKAT} into equation is not an option.

\medskip
{\it Negation.}
The CRC returned a large number of critical pairs that involved the negation operator. Some of the pairs indicated the necessity of distributing negation over disjunction and conjunction as in Theorem~\ref{thm:neg-additional}. In accordance, we considered:
\begin{equation}\label{eq::dist-neg}
\begin{array}{rcll}
\neg (a + b) & \equiv & (\neg a) \cdot (\neg b) & \textnormal{DIST-NEG-DISJ}\\
\neg (a \cdot b) & \equiv & (\neg a) + (\neg b) & \textnormal{DIST-NEG-CONJ}\\
\end{array}
\end{equation}

Nevertheless, this did not help us eliminate all critical pairs either. Hence, we decided to apply a preprocessing step that reduces arbitrary NetKAT policies to equivalent negation-free policies in two steps. First, negations are pushed to the level of NetKAT predicates $f = n_i$ according to~(\ref{eq::dist-neg}). Then, each negated predicate $\neg (f = n_i)$ is soundly replaced according to:
\begin{equation}\label{eq::elim-neg}
\neg (f = n_i) \equiv \Sigma_{j \not = i} f = n_j ~~~~~~ \textnormal{NEG-ELIM}   
\end{equation}
As in~\cite{DBLP:conf/popl/AndersonFGJKSW14}, field values are drawn from finite domains.

\medskip
{\it Distributivity.} We also noticed that the distributivity axioms {\small BA-PLUS-DIST}, {\small KA-SEQ-DIST-L} and {\small KA-SEQ-DIST-R} contribute to the violation of the Church-Rosser property when used together within the equational theory of policies. For instance, 
\[
(a + b) \cdot (a + c)
\]
can be reduced according to {\small BA-PLUS-DIST} to: 
\begin{equation}\label{eq:ba-plus-dist}
a + b \cdot c
\end{equation}
and it can be reduced according to {\small KA-SEQ-DIST-R} and {\small BA-SEQ-IDEM}, to:
\begin{equation}\label{eq:ba-seq-dist}
a + b\cdot a + a\cdot c + b\cdot c.
\end{equation}
From the perspective of safety failure explanations, the policy in~(\ref{eq:ba-seq-dist}) subsumes its counterpart in~(\ref{eq:ba-plus-dist}). Hence, {\small BA-PLUS-DIST} can be discarded as well.

\subsection{Equational Specifications for Explaining Failures}\label{seq::tooleq}
In this section we introduce {\tooleq}, a tool for explaining NetKAT safety failures. {\tooleq} is based on the Maude equational specification $\NetKATdup$, implemented in a manner that enables accommodating the ideas in Section~\ref{seq:imm-chal}. The functional modules behind {\tooleq} are proven Church-Rosser and terminating. Hence, {\tooleq} provides the unique solution encoding all relevant explanations on how packets can travel from a specified ingress to the undesired egress. 

Assume the $\NetKATdup$ policies encoding a network topology $t$, a switch policy $p$, an ingress policy $\inn$, and an egress policy $\out$ encoding an undesired property. Let $P \triangleq \inn \cdot (1 + p \cdot t)^n \cdot out$ be the corresponding NetKAT program to be analyzed for safety failures.
{\tooleq} works in three steps.

(I) Firstly, the tool recursively unfolds the policy $(1 + p \cdot t)^n$ into a term $U$. Then, $U$ is reduced to a term $F$ uniquely expressed as a sum of policies that are union-free and negation-free. This is achieved in accordance with the equivalences~(\ref{eq::dist-neg}) and~(\ref{eq::elim-neg}) in Section~\ref{seq:imm-chal}, and with the distributivity axioms {\small KA-SEQ-DIST-L} and {\small KA-SEQ-DIST-R}, respectively.

(II) Next, $F$ is reduced to $F'$ according to the relevant NetKAT axioms implemented in Maude in a slightly modified fashion, due to the issues related to the commutativity of $\cdot$, as discussed in Section~\ref{seq:imm-chal}.

For an intuition, consider a (possibly conditional) NetKAT axiom generically denoted by $l \cdot r \equiv t ~(\textnormal{if} ~C)$. With a commutative $\cdot$, it might be the case that $F$ can be equivalently represented as a term $F'$ within which $l \cdot r$ can be matched (whenever $C$ holds). Nevertheless, given that a commutative $\cdot$ could not be considered in the Maude specification of NetKAT policies, it might be the case that $l \cdot r$ does not match in $F'$ (even if $C$ holds).
Consequently, the aforementioned axiom might not be employed by the Maude equational reduction procedure, when starting with $F'$.

The solution is to enable sound reductions according to $l \cdot r \equiv t ~(\textnormal{if} ~C)$, in all possible contexts. More precisely, each such axiom is implemented via a set of equations of shape:

\[
\begin{array}{rcl}
l \cdot r  & \equiv & t ~(\textnormal{if} ~C)\\
l \cdot M \cdot r & \equiv & t ~(\textnormal{if } C \textnormal{ and}~ C_s)\\ 
\end{array}
\]
where $M$ is a policy term and $C_s$ is a condition that ensures the sound application of the newly introduced equations. For an example, we next provide a corresponding Maude implementation of the {\small PA-CONTRA}.

{\fontsize{9}{12}\selectfont
\begin{verbatim}
ceq (F1 = I1) . (F1 = I2) = zero if I1 =/= I2 .
ceq (F1 = I1) . M . (F1 = I2) = zero if I1 =/= I2 /\ not (F1 <- I2 occursInner M)  .
\end{verbatim}
}

Intuitively, \verb#(F1 <- I2 occursInner M)# checks whether the field modification \spverb#F1 <- I2# occurs within the policy \verb#M#. \spverb#(F1 <- I2 occursInner M)# is evaluated to \verb#true# whenever the field modification \verb#F1 <- I2# occurs within \verb#M#. Otherwise, \verb#(F1 <- I2 occursInner M)# is evaluated to \verb#false#. We negate the result obtained from performing this check and this way, the second equation soundly equates its left-hand side to \verb#zero#, as the field \verb#F1# is never modified with the value \verb#I2# within \verb#M# and the initial value of the field \verb#F1# is different than \verb#I2#, hence the test \verb#F1 = I2# will always fail.

We then apply certain axioms in order to simplify the expressions. For an example, we provide the implementation of {\small BA-SEQ-IDEM} axiom.

{\fontsize{9}{12}\selectfont
\begin{verbatim}
eq A . A = A .
ceq (F1 = I1) . M . (F1 = I1) = (F1 = I1) . M if M ? F1 .
\end{verbatim}
}

\noindent where \verb#A# is of sort predicate.  The  operator \verb#?# works in a similar fashion to the operator \verb#occursInner#. Intuitively, \verb#occursInner# checks whether a specific term occurs inside a given policy, whereas the operator \verb#?# only checks whether there exist an assignment to a field in a given policy. The term \verb#M ? F1# is evaluated to \verb#true# whenever \verb#F1# is not modified within \verb#M#. Otherwise, \verb#M ? F1# is evaluated to \verb#false#. This way, it is ensured that the term \verb#F1 = I1# can commute inside the terms in \verb#M# as \verb#F1# is not modified within \verb#M#, and then {\small BA-SEQ-IDEM} axiom can be applied. 

Another phase in this step is to define a total order between the fields and reorder the terms according to this total order. This phase is needed to obtain canonical forms. We introduce the operator \verb#<# to define the total order and we then apply the following equations to bring the expressions into a canonical form.

{\fontsize{9}{12}\selectfont
\begin{verbatim}
ceq (F1 <- I1) . (F2 <- I2) = (F2 <- I2) . (F1 <- I1) if F1 < F2 .
ceq (F1 = I1) . (F2 = I2) = (F2 = I2) . (F1 = I1) if F1 < F2 . 
\end{verbatim}
}

(III) Last, but not least, if the reduction at step (II) returns the unique term $F'' \not \equiv 0$ encoding all safety failure explanations,
then {\tooleq} computes all relevant explanations when starting with $F''$, according to the minimization procedure in Section~\ref{sec:minimality}.

The full implementation of {\tooleq} can be downloaded at:\\
{\small
\url{https://gitlab.inf.uni-konstanz.de/huenkar.tunc/sdn-safecheck}.
}

\section{Experimental Evaluation}\label{sec:experiments}
We performed experiments to evaluate the performance of our implementation on the publicly available Topology Zoo dataset~\cite{DBLP:conf/pam/GillALM08} which consist of $261$ real-world network topologies. Given that, in essence, safety failure analysis reduces to reachability analysis, in our experiments we analyzed the time required to check for reachability within these topologies. More precisely, we checked point-to-point reachability between the two nodes in the longest path within the network. If there were more than one such paths, then an arbitrary choice was made. We encoded the topologies in the dataset into NetKAT and generated a destination-based shortest path policy to connect each node with every other node by using an automated procedure similar to the one in~\cite{DBLP:conf/pldi/BeckettGW16}. The encoded topologies are made available in the link above alongside the implementation of the tool. All the experiments were performed on a computer running Ubuntu 18.04 LTS with 8 core 3.7GHz AMD Ryzen 7 2700x processors and 32 GB RAM.

\begin{figure}[ht] 
\center
\includegraphics[scale=0.109,origin=c]{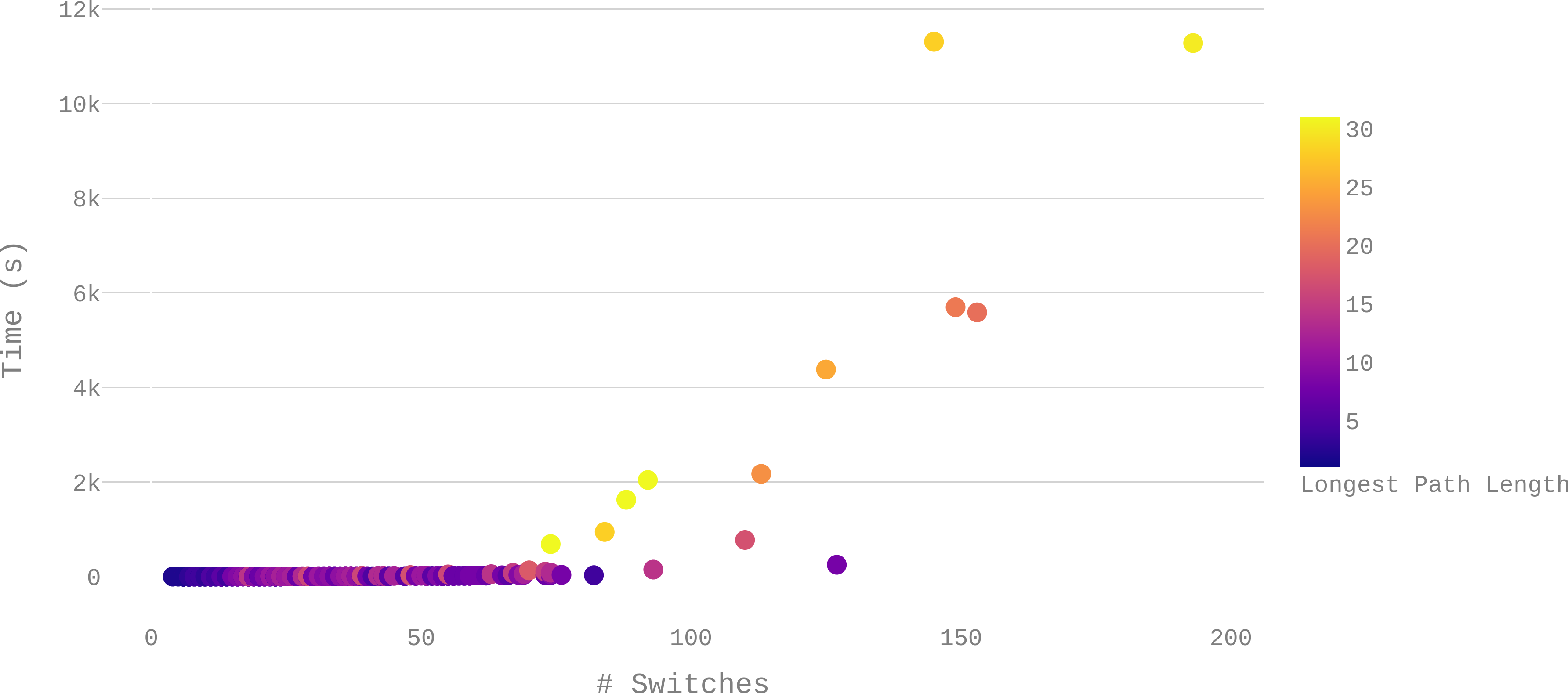}
\caption{Experimental results \label{fig::result}} 
\end{figure}

A scatter plot of the obtained execution times is sketched in Figure~\ref{fig::result}. We set a time limit of $12000$ seconds for checking the reachability property. For three topologies the computation did not finish under this time limit. The networks for which the computation timed out consist of $754$, $197$ and $153$ nodes, and correspond to first, second and fourth largest network in the Topology Zoo dataset, respectively. The results show that for networks up to $70$ switches a result is obtained under $60$ seconds in most cases. For networks with more than $70$ switches the variance of the obtained execution times is higher. We observe that the longest path length plays a significant role in determining the running time of {\tooleq} as networks grow in size. 

\begin{figure}[ht] 
\center
\includegraphics[scale=0.108,origin=c]{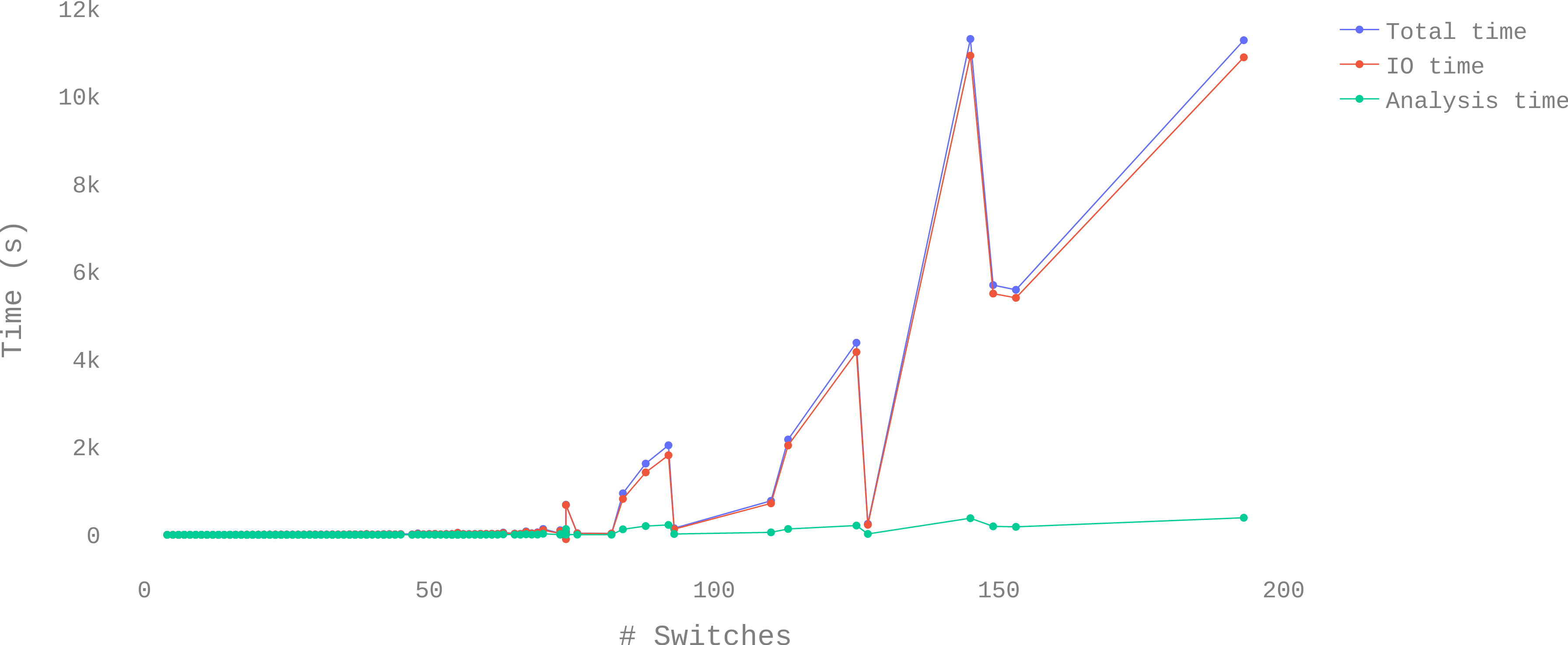}
\caption{Time comparisons \label{fig::time-comparison}} 
\end{figure}

The execution time can be divided into two categories: IO time and analysis time. The IO time corresponds to the time frame in which the expressions are written into a file and loaded into Maude. Analysis time corresponds to the time frame in which the rewriting and the failure analysis is performed. In Figure~\ref{fig::time-comparison} we display a comparison between the time taken for IO and the time taken for performing the analysis. We observe that the IO time dominates the total execution time. 

\section{Conclusions}\label{sec::discussion}

In this paper we formulate a notion of safety in the context of NetKAT programs~\cite{DBLP:conf/popl/AndersonFGJKSW14} and provide an equational framework that computes all relevant explanations witnessing a bad, or an unsafe behaviour, whenever the case. The proposed equational framework is a slight modification of the sound and complete axiomatisation of NetKAT and, as shown by the experimental evaluation, is parametric on the size of the underlying network topology. The new equational system is not complete, as some of the original NetKAT axioms have been removed to enable more comprehensive failure explanations. Nevertheless, the purpose of our framework is not to reason about equivalence, but to identify safety failure violations and corresponding explanations.

Our approach is orthogonal to related works which rely on model-checking algorithms for computing all counterexamples witnessing the violation of a certain property, such as~\cite{DBLP:conf/vmcai/Leitner-FischerL13,DBLP:journals/corr/abs-1901-00588}, for instance.
The Maude system was exploited for implementing {\tooleq} tool for automatically computing safety failure explanations.
Corresponding experimental evaluation based on the Topology Zoo dataset~\cite{DBLP:conf/pam/GillALM08} is also provided.

The results in this paper are part of a larger project on (counterfactual) causal reasoning on NetKAT.
In~\cite{Lew73}, Lewis formulates the counterfactual
argument, which defines when an event is considered a cause for some
effect (or hazardous situation) in the following way: a) whenever the event
presumed to be a cause occurs, the effect occurs as well, and b) when the presumed cause does not occur,
the effect will not occur either.
The current result corresponds to item a) in Lewis' definition, as it describes the events that have to happen in order for the hazardous situation to happen as well. The next natural step is to capture the counterfactual test in b). This reduces to tracing back the explanations to the level of the switch policy, and rewrite the latter so that it disables the generation of the paths leading to the undesired egress. The generation of a ``correct'' switch policy can be seen as an instance of program repair.

In the future we would be, of course, interested in defining notions of causality (and associated algorithms) with respect to the violation of other relevant properties such as liveness, for instance. We would also like to explain and eventually disable routing loops (i.e., endlessly looping between A and B) from occurring. Or, we would like to identify the cause of packets being not correctly filtered by a certain policy.

\paragraph{Acknowledgements}{The authors are grateful to Francisco Dur\' an, Steven Eker and the Maude/RL community for their useful comments on using the Maude Formal Environment,
and to the reviewers of FROM 2019, for their feedback and observations.
Special thanks are addressed to Marcello Bonsangue and Tobias Kapp\'e, for their insight into the formal foundations of NetKAT. Many thanks to Hossein Hojjat and Dang Mai for their insight into the behaviour of SDNs and associated programming languages.
This work was supported by the DFG project ``CRENKAT'', proj. no. $398056821$.}

\bibliographystyle{elsarticle-num} 
\bibliography{generic}

%
%
%
\end{document}